\documentclass[preprint,12pt]{elsarticle}

\vfuzz2pt 
\hfuzz2pt 
\usepackage{amsmath}
\usepackage{amssymb}

\newenvironment{proof}{\noindent{\bf Proof:}}
{\begin{flushright} \vspace{-0.5cm} $\square$ \end{flushright}}


\newcommand {\abs}[1]{\left\vert#1\right\vert}
\newcommand {\set}[1]{\left\{#1\right\}}

\newcommand {\defined} {\stackrel{def} {=}}


\vfuzz2pt 
\hfuzz2pt 

\newtheorem{coro}     {Corollary}       [section]
\newtheorem{claim}    {Claim}           [section]
\newtheorem{observation} {Observation}  [section]

\newtheorem{theorem}  {Theorem}         [section]
\newtheorem{lemma}    {Lemma}           [section]

\usepackage{algorithm}
\usepackage{algorithmicx}
\usepackage{algpseudocode}
\usepackage{algpascal}
\usepackage{graphicx}
\usepackage{subfigure}
\usepackage[usenames,dvipsnames]{pstricks}
\usepackage{graphics}
\usepackage{epsfig}
\usepackage{subfig}
\usepackage{booktabs}
\usepackage{multirow}
\usepackage{paralist}

\usepackage{lineno}

\setlength{\textwidth}{16cm}
\setlength{\textheight}{23.5cm}
\setlength{\topmargin}{-0.3in}
\setlength{\oddsidemargin}{0in}
\setlength{\evensidemargin}{0in}

\newcommand{\runningtitle}[1]{\vspace{0.5ex}\noindent{{\small \textbf{\boldmath #1~}}}\\}

\newcommand{\eptg}[1] {\textsc{Ept}(#1)}
\newcommand{\vptg}[1] {\textsc{Vpt}(#1)}
\newcommand{\enptg}[1] {\textsc{Enpt}(#1)}
\newcommand{\ept} {\textsc{EPT}}
\newcommand{\vpt} {\textsc{VPT}}
\newcommand{\enpt} {\textsc{ENPT}}
\newcommand{\pp} {{\cal P}}
\newcommand{\eptgp} {\eptg{\pp}}
\newcommand{\vptgp} {\vptg{\pp}}
\newcommand{\enptgp} {\enptg{\pp}}
\newcommand{\prb} {\textsc{HamiltonianPairRec}}
\newcommand{\prbpthree} {\textsc{P3-HamiltonianPairRec}}

\newcommand{\reptp}[2] {\left< #1,#2 \right>}
\newcommand{\rep} {\reptp{T}{\pp}}
\newcommand{\repn}[1] {\reptp{T_{#1}}{\pp_{#1}}}
\newcommand{\ppbar} {\bar{\pp}}
\newcommand{\repbar} {\reptp{\bar{T}}{\ppbar}}

\renewcommand{\split} {\textit{split}}
\newcommand{\op} {{\cal O}}
\newcommand{\opg} {\op(G,C)}

\newcommand{\wdt} {{\cal W}}
\newcommand{\wdtg} {\wdt(G,C)}
\newcommand{\wdtgprime} {\wdt(G',C')}
\newcommand{\wdtgdprime} {\wdt(G'',C'')}

\newcommand{\contract}[2] {{#1}_{/{#2}}}
\newcommand{\contractge} {\contract{G}{e}}
\newcommand{\contractp}[1] {\contract{\rep}{#1}}
\newcommand{\contractppq}{\contractp{P_p,P_q}}
\newcommand{\contractgce} {\contract{(G,C)}{e}}
\sloppy

\journal{Discrete Applied Mathematics}

\begin{document}
\begin{frontmatter}
\title{Graphs of Edge-Intersecting Non-Splitting Paths in a Tree: Representations of Holes - Part I\tnoteref{wg}\tnoteref{ISF-TUBITAK}}
\tnotetext[ictcs]{A preliminary version of this paper appeared in \cite{BESZ13}.}
\tnotetext[ISF-TUBITAK]{This work was supported in part by the Israel Science Foundation grant No. 1249/08, by TUBITAK PIA BOSPHORUS Grant No. 111M303, by the TUBITAK 2221 Programme, Bogazici University Scientific Research Fund  BAP grant no 6461D and by the Catedra de Excelencia of Universidad Carlos III de Madrid, Departmento de Ingenieria Telematica.}

\author[bu]{Arman Boyac{\i}}
\ead{arman.boyaci@boun.edu.tr}

\author[bu]{T{\i}naz Ekim}
\ead{tinaz.ekim@boun.edu.tr}

\author[bu,telhai]{Mordechai Shalom\corref{corr}}
\ead{cmshalom@telhai.ac.il}
\cortext[corr]{Corresponding Author}

\author[technion]{Shmuel Zaks}
\ead{zaks@cs.technion.ac.il}

\address[bu]{Department of Industrial Engineering, Bogazici University, Istanbul, Turkey}
\address[telhai]{TelHai Academic College, Upper Galilee, 12210, Israel}
\address[technion]{Department of Computer Science, Technion, Haifa, Israel}

\begin{abstract}
Given a tree and a set $\pp$ of non-trivial simple paths on it, $\vptgp$ is the $\vpt$ graph (i.e. the vertex intersection graph) of the paths $\pp$ of the tree $T$, and $\eptgp$ is the $\ept$ graph (i.e. the edge intersection graph) of $\pp$. These graphs have been extensively studied in the literature.
Given two (edge) intersecting paths in a graph, their \emph{split vertices} is the set of vertices having degree at least $3$ in their union. A pair of (edge) intersecting paths is termed \emph{non-splitting} if they do not have split vertices (namely if their union is a path).

In this work, motivated by an application in all-optical networks, we define the graph $\enptgp$ of edge-intersecting non-splitting paths of a tree, termed the $\enpt$ graph, as the (edge) graph having a vertex for each path in $\pp$, and an edge between every pair of paths that are both edge-intersecting and non-splitting. A graph $G$ is an $\enpt$ graph if there is a tree $T$ and a set of paths $\pp$ of $T$ such that $G=\enptgp$, and we say that $\rep$ is a \emph{representation} of $G$. We first show that cycles, trees and complete graphs are $\enpt$ graphs.

Our work follows the lines of Golumbic and Jamison's research \cite{Golumbic1985151,GJ85} in which they defined the $\ept$ graph class, and characterized the representations of chordless cycles (holes). It turns out that $\enpt$ holes have a more complex structure than $\ept$ holes. In our analysis, we assume that the $\ept$ graph corresponding to a representation of an $\enpt$ hole is given. We also introduce three assumptions $(P1)$, $(P2)$, $(P3)$ defined on $\ept$, $\enpt$ pairs of graphs. In this Part I,  using the results of Golumbic and Jamison as building blocks, we characterize (a) $\ept$, $\enpt$ pairs that satisfy $(P1)$, $(P2)$, $(P3)$, and (b) the unique minimal representation of such pairs.


\end{abstract}

\begin{keyword}
Intersection Graphs \sep EPT Graphs
\end{keyword}
\end{frontmatter}

\section{Introduction}\label{sec:intro}
Given a tree $T$ and a set $\pp$ of non-trivial simple paths in $T$, the Vertex Intersection Graph of Paths in a Tree (\vpt) and  the Edge Intersection Graph of Paths in a Tree ($\ept$) of $\pp$ are denoted by $\vptgp$ and $\eptgp$, respectively. Both graphs have $\pp$ as vertex set. $\vptgp$ (resp. $\eptgp$) contains an edge between two vertices if the corresponding two paths intersect in at least one vertex (resp. edge). A graph $G$ is $\vpt$ (resp. $\ept$) if there exist a tree $T$ and a set $\pp$ of non-trivial simple paths in $T$ such that $G$ is isomorphic to $\vptgp$ (resp. $\eptgp$). In this case we say that $\rep$ is a $\vpt$ (resp. an $\ept$) representation of $G$.

In this work we focus on edge intersections of paths. The graph of edge-intersecting and non-splitting paths of a tree ($\enpt$) of a given representation $\rep$ denoted by $\enptgp$, has a vertex $v$ for each path $P_v$ of $\pp$ and two vertices $u,v$ of $\enptgp$ are adjacent if the paths $P_u$ and $P_v$ edge-intersect and do not split (that is, their union is a path). A graph $G$ is an $\enpt$ graph if there is a tree $T$ and a set of paths $\pp$ of $T$ such that $G$ is isomorphic to $\enptgp$. We note that $\eptgp=\enptgp$ is an interval graph whenever $T$ is a path. Therefore the class $\enpt$ includes all interval graphs.

\subsection{Applications}
$\ept$ and $\vpt$ graphs have applications in communication networks. Consider a communication network of a tree topology $T$. The message routes to be delivered in this communication network are paths on $T$. Two paths conflict if they both require to use the same link (node). This conflict model is equivalent to an $\ept$ (a $\vpt$) graph. Suppose we try to find a schedule for the messages such that no two messages sharing a link (node) are scheduled in the same time interval. Then a vertex coloring of the $\ept$ ($\vpt$) graph corresponds to a feasible schedule on this network.

$\ept$ graphs also appear in all-optical telecommunication networks. The so-called Wavelength Division Multiplexing (WDM) technology can multiplex different signals onto a single optical fiber by using different wavelength ranges of the laser beam \cite{CGK92,R93}. WDM is a promising technology enabling us to deal with the massive growth of traffic in telecommunication networks, due to applications such as video-conferencing, cloud computing and distributed computing \cite{DV93}. A stream of signals traveling from its source to its destination in optical form is termed a \emph{lightpath}. A lightpath is realized by signals traveling through a series of fibers, on a certain wavelength. Specifically, Wavelength Assignment problems (WLA) are a family of path coloring problems that aim to assign wavelengths (i.e. colors) to lightpaths, so that no two lightpaths with a common edge receive the same wavelength and certain objective function (depending on the problem) is minimized.

\emph{Traffic Grooming} is the term used for the combination of several low (i.e. sub-wavelength) capacity requests (modeled by paths of a network) into one lightpath (modeled by a path or cycle of the network) using Time Division Multiplexing (TDM) technology \cite{GRS98}. In this context a set of  paths can be combined into one lightpath, as long as they satisfy the following two conditions:
\begin{itemize}
\item The load condition: On any given fiber, at most $g$ requests can use the same lightpath, where $g$ is an integer termed \emph{the grooming factor}.
\item The no-split condition: a lightpath (i.e. the union of the requests using the lightpath) constitutes a path or a cycle of the network.
\end{itemize}
Clearly, the second condition cannot be tested in the $\ept$ model. For this reason we introduce the $\enpt$ graphs that provide the required information.

Readers unfamiliar with optical networks may consider the following analogous problem in transportation. Consider a set of transportation requests modeled by paths, and trucks traveling along paths or cycles. Trucks are able to load and drop items during their journey as long as at any given time their load does not exceed their capacity. The no-split condition reflects the fact that a truck has to follow a path or a cycle.

By the no-split condition, a (feasible) traffic grooming corresponds to a vertex coloring of the graph $(V(\eptgp), E(\eptgp) \setminus E(\enptgp))$. Moreover, by the load condition, every color class induces a sub-graph of $\eptgp$ with clique number at most $g$. Therefore, it makes sense to analyze the structure of these graph pairs, i.e. the two graphs $\eptgp$ and $\enptgp$ defined on the same set of vertices.

Under this setting one can consider various objective functions such as:
\begin{itemize}
\item{Minimize the number of wavelengths / trucks:}
When the number of wavelengths (resp. trucks) is scarce, one aims to minimize this number. We note that
when the parameter $g$ is sufficiently big (i.e. $g=\infty$) the problems boils down to the minimum vertex coloring problem.

\item{Minimize the number of regenerators / total distance traveled:}
The signal traveling on a lightpath has to be regenerated along its way, implying a regeneration cost roughly proportional to its length. Similarly, a truck incurs operation expenses proportional to the distance it travels.

In order to model this problem, one has to assign weights to the $\enpt$ edges indicating the length of the overlap of two non-splitting requests.
The gains obtained from putting two requests in the same lightpath (truck) is exactly the weight of the corresponding $\enpt$ edge. Therefore, an optimal solution corresponds to a partition of the vertices into paths of $\enptgp$ (without a chord from $\eptgp$ and with a maximum clique size of at most $g$), such that the sum of the weights of the edges of these paths is maximized.
\end{itemize}

\subsection{Related Work}
$\ept$ and $\vpt$ graphs have been extensively studied in the literature. Although $\vpt$ graphs can be characterized by a fixed number of forbidden subgraphs \cite{Leveque2009Characterizing}, it is shown that $\ept$ graph recognition is NP-complete \cite{Golumbic1985151}. Edge intersection and vertex intersection give rise to identical graph classes in the case of paths in a tree having maximum degree 3 \cite{Golumbic1985151}. However, $\vpt$ graphs and $\ept$ graphs are incomparable in general; neither $\vpt$ nor $\ept$ contains the other. Main optimization and decision problems such as recognition \cite{Fanica1978211}, the maximum clique \cite{Fanica2000181}, the minimum vertex coloring \cite{Golumbic:2004:AGT:984029} and the maximum stable set problems \cite{Spinrad1995181} are polynomial-time solvable in $\vpt$
whereas the minimum vertex coloring problems remain NP-complete in $\ept$ graphs \cite{Golumbic1985151,GJ85}. In contrast, one can solve in polynomial time the maximum clique \cite{GJ85} and the maximum stable set \cite{RobertE1985221} problems in $\ept$ graphs. In addition, \cite{GJ85} shows that holes have a unique $\ept$ representation, called a \emph{pie}. Forbidden subgraphs of $\ept$ graphs implied by this result are also provided.

After these works on $\ept$ and $\vpt$ graphs in the early 80's, this topic did not get focus until very recently. Current research on intersection graphs is concentrated on the comparison of various intersection graphs of paths in a tree and their relation to chordal and weakly chordal graphs \cite{Golumbic2008Equivalences,GolumbicLS08}. Also, a tolerance model is studied via $k$-edge intersection graphs where two vertices are adjacent if their corresponding paths intersect on at least $k$ edges \cite{Golumbic2008Kedge}. Besides, several recent papers consider the edge intersection graphs of paths on a grid (e.g. \cite{BiedlS10,GolumbicLS09}).

\subsection{Our Contribution}
In this work we define the new family of $\enpt$ graphs, and investigate its basic properties. To this aim, we first study possible $\enpt$ representations of some basic structures such as trees, cliques and holes.

It turns out that in $\enpt$ graphs, representations of chordless cycles have a much more complicated structure, yielding several possible representations, see Figure \ref{fig:ept-vs-enpt-cycle-representations}. In fact, given a hole $C$, several $\enpt$ representations $\rep$ such that $\enptgp$ is isomorphic to $C$ but $\eptgp$ are non-isomorphic to each other are possible, see Figure \ref{fig:eptn-c4}.

\begin{figure}[htbp]
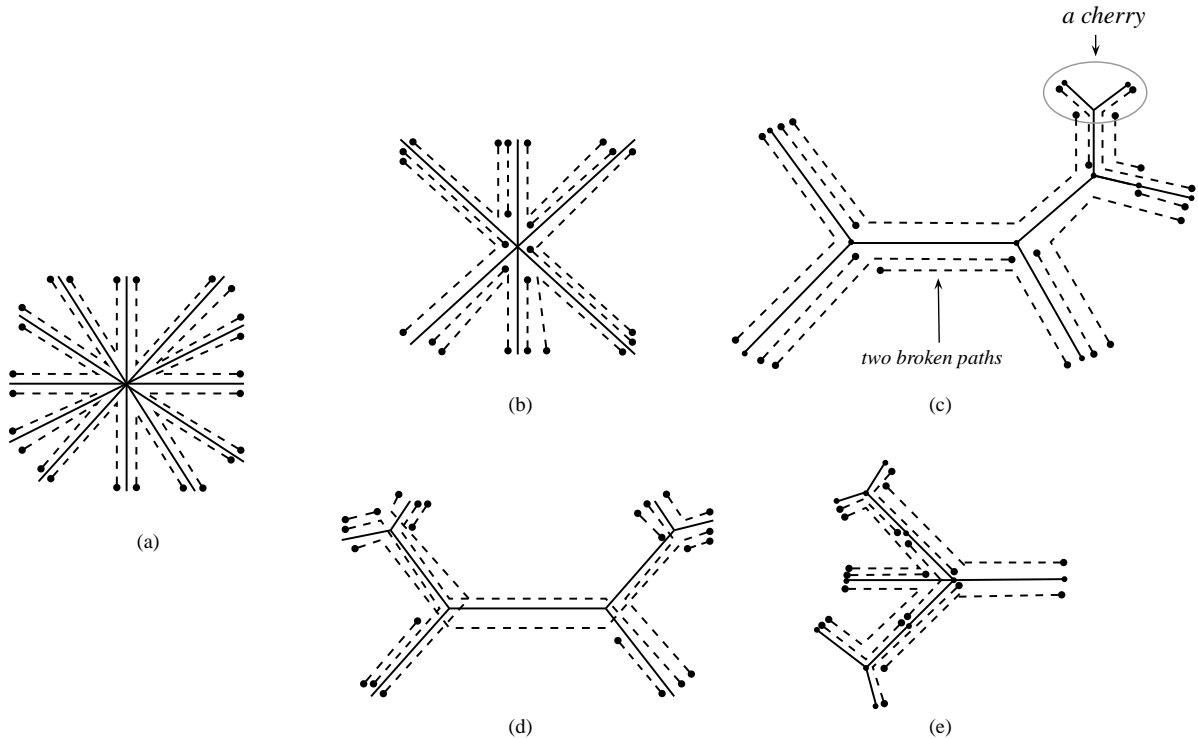

\centering
\include{figure/ept-vs-enpt-cycle-representations}
\caption{(a) The $\ept$ representation of a $C_{12}$, (b) a simple $\enpt$ representation of a $C_{12}$ (c) a broken planar tour with cherries representation of a $C_{12}$. In Part II of this work \cite{BESZ13-TR2}, we showed that all $\enpt$ representations satisfying the condition $(P3)$, defined in Section \ref{sec:EPTNBasics}, of a cycle have this form. (d) a non-planar tour representation of a $C_{12}$, (e) a non-tour representation of a $C_{10}$.}
\label{fig:ept-vs-enpt-cycle-representations}
\end{figure}

Consider the pair $(G,C)$ where $G$ is a graph and $C$ is a Hamiltonian cycle of $G$. We restrict our attention to the determination of a representation $\rep$ such that $\eptgp = G$ and $\enptgp = C$. In this case we will say that $\rep$ is a representation of $(G,C)$. We introduce three assumptions $(P1)$, $(P2)$, $(P3)$ defined on $\ept$, $\enpt$ pairs of graphs. In this Part I, using the results of Golumbic and Jamison as building blocks, we characterize (a) $\ept$, $\enpt$ pairs that satisfy $(P1)$, $(P2)$, $(P3)$, and (b) the unique minimal representation of such pairs.

In Part II of this work \cite{BESZ13-TR2} we relax these assumptions and show that the results can be extended to the case where only $(P3)$ holds. A family of non-$\enpt$ graphs is obtained as a result of this extension. For the general case (without assumption $(P3)$) we show that it is NP-complete to determine whether a given $\ept$, $\enpt$ pair has a representation even when the $\enpt$ graph is a cycle. This result extends the NP-completeness of $\ept$ graph recognition \cite{Golumbic1985151}.

This paper is organized as follows: in Section \ref{sec:prelim} we give basic definitions and preliminaries that we use in developing our results. We provide $\enpt$ representations of basic graphs such as trees, cliques and cycles, thus showing that they are included in the family of $\enpt$ graphs. We also characterize all the $\enpt$ representations of cliques. In Section \ref{sec:EPTNBasics} we obtain basic results regarding $\enpt$ graphs, their relationship with $\ept$ graphs, and their (common) representations. We then define the contraction operation, which is basically replacing two paths with their union provided that this union is a path. We also introduce the assumptions $(P1)$, $(P2)$ and $(P3)$, under which in Section \ref{sec:unionminimal}, we characterize the representations of $\enpt$ holes. More specifically, we characterize the representations $\rep$ of pairs $(G,C)$, where $C$ is a Hamiltonian cycle of $G$, such that $\eptgp=G$ and $\enptgp=C$ and satisfy these assumptions. Our work \cite{BESZ13-TR2} considers the relaxation of these assumptions.

\section{Preliminaries and Basic Results}\label{sec:prelim}
In this section we provide definitions used in the paper, present known results related to our work, and develop basic results. The section is organized as follows: Section \ref{subsec:prelim-definitions} is devoted to basic definitions, in Section \ref{subsec:prelim-eptn-knownresults} we present known results on $\ept$ graphs that are closely related to our work and in Section \ref{subsec:prelim-some-enpt-graphs} we present some graph families that are contained in the family of $\enpt$ graphs.

\subsection{Definitions}\label{subsec:prelim-definitions}
\runningtitle{General Notation:}
Given a graph $G$ and a vertex $v$ of $G$, we denote by $d_G(v)$ the degree of $v$ in $G$. A vertex is called a \emph{leaf} (resp. \emph{intermediate vertex}, \emph{junction}) if $d_G(v)=1$ (resp. $=2$, $\geq 3$). Whenever there is no ambiguity we omit the subscript $G$ and write $d(v)$.

Given a graph $G$, $\bar{V} \subseteq V(G)$ and $\bar{E} \subseteq E(G)$ we denote by $G[\bar{V}]$ and $G[\bar{E}]$ the subgraphs of $G$ induced by $\bar{V}$ and by $\bar{E}$, respectively.

The \emph{union} of two graphs $G, G'$ is the graph $G \cup G' \defined (V(G) \cup V(G'), E(G) \cup E(G'))$. The \emph{join} $G+G'$ of two disjoint graphs $G,G'$ is the graph $G \cup G'$ together with all the edges joining $V(G)$ and $V(G')$, i.e. $G + G' \defined (V(G) \cup V(G'), E(G) \cup E(G') \cup (V(G) \times V(G')))$.

Given a (simple) graph $G$ and $e \in E(G)$, we denote by $\contractge$ the (simple) graph obtained by contracting the edge $e = \left\{ p, q \right\}$ of $G$, i.e. by coinciding the two endpoints of $e$ to a single vertex $p.q$ and removing self loops and parallel edges.

For any two vertices $u,v$ of a tree $T$, we denote by $p_T(u,v)$ the unique path between $u$ and $v$ in $T$. We denote the set of all positive integers at most $k$ as $[k]$.

\runningtitle{Intersections and union of paths:}
Given two paths $P,P'$ in a graph, we write $P \parallel P'$ to denote that $P$ and $P'$ are \emph{edge-disjoint}. The \emph{split vertices} of $P$ and $P'$ is the set of junctions in their union $P \cup P'$ and is denoted by $\split(P,P')$. Whenever $P$ and $P'$ edge-intersect and $\split(P, P') = \emptyset$  we say that $P$ and $P'$ are \emph{non-splitting} and denote this by $P \sim P'$. In this case $P \cup P'$ is a path or a cycle. When $P$ and $P'$ edge-intersect and $\split(P, P') \neq \emptyset$ we say that they are \emph{splitting} and denote this by $P \nsim P'$.  Clearly, for any two paths $P$ and $P'$ exactly one of the following holds: $P \parallel P'$, $P \sim P'$, $P \nsim P'$.

When the graph $G$ is a tree, the union $P \cup P'$ of two edge-intersecting paths $P,P'$ of $G$ is a tree with at most two junctions, i.e. $\abs{\split(P,P')} \leq 2$ and $P \cup P'$ is a path whenever $P \sim P'$.

\runningtitle{The VPT, EPT and ENPT graphs:}
Let $\pp$ be a set of paths in a tree $T$. The graphs $\vptgp, \eptgp$ and $\enptgp$ are graphs such that $V(\enptgp) = V(\eptgp) = V(\vptgp)=\set{p | P_p \in \pp)}$. Given two distinct paths $P_p,P_q \in \pp$, $\set{p,q}$ is an edge of $\enptgp$ if $P_p \sim P_q$, and $\set{p,q}$ is an edge of $\eptgp$ (resp. $\vptgp$) if $P_p$ and $P_q$ have a common edge (resp. vertex) in $T$. See Figure \ref{fig:vpt-ept-eptn} for an example. From these definitions it follows that:
\begin{observation}
$E(\enptgp) \subseteq E(\eptgp)\subseteq E(\vptgp)$.
\end{observation}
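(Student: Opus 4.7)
The plan is to unpack both inclusions directly from the definitions given in the excerpt, since neither involves any subtlety beyond logical implication on the defining conditions of the three graphs. The three graphs share the same vertex set $V = \set{p \mid P_p \in \pp}$, so what must be shown is simply that the binary relation defining each graph is contained in the next.

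First I would handle $E(\enptgp) \subseteq E(\eptgp)$. Take an arbitrary edge $\set{p,q} \in E(\enptgp)$; by definition this means $P_p \sim P_q$, which in turn means that $P_p$ and $P_q$ edge-intersect \emph{and} $\split(P_p, P_q) = \emptyset$. Discarding the second conjunct leaves exactly the condition required for $\set{p,q}$ to belong to $E(\eptgp)$, namely that $P_p$ and $P_q$ share a common edge in $T$.

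Next I would handle $E(\eptgp) \subseteq E(\vptgp)$. If $\set{p,q} \in E(\eptgp)$, then $P_p$ and $P_q$ share some common edge $e = \set{u,v}$ of $T$. Since any path that contains an edge contains both of its endpoints as vertices, both $P_p$ and $P_q$ contain $u$ (and $v$), so $P_p$ and $P_q$ share at least one common vertex in $T$. This is precisely the condition for $\set{p,q} \in E(\vptgp)$.

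There is no real obstacle in this argument; the observation is essentially a sanity check confirming that, for a fixed representation $\rep$, the newly introduced $\enpt$ graph slots into the classical chain of intersection graphs on the edge-set level, mirroring the fact that non-splitting edge-intersection is a strictly stronger relation than edge-intersection, which is itself strictly stronger than vertex-intersection.
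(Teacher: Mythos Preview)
Your proposal is correct. The paper itself does not supply a proof for this observation; it simply writes ``From these definitions it follows that'' and states the inclusions, so your direct unpacking of the definitions is exactly the intended (and only reasonable) argument.
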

Two graphs $G$ and $G'$ such that $V(G)=V(G')$ and $E(G') \subseteq E(G)$ are termed a \emph{pair (of graphs)} denoted as $(G,G')$. If $\eptgp=G$ (resp. $\enptgp=G$) we say that $\rep$ is an $\ept$ (resp. $\enpt$) representation for $G$. If $\eptgp=G$ and $\enptgp=G'$ we say that $\rep$ is a representation for the pair $(G,G')$. Given a pair $(G,G')$ the sub-pair induced by $\bar{V} \subseteq V(G)$ is the pair $(G[\bar{V}],G'[\bar{V}])$. Clearly, any representation of a pair induces representations for its induced sub-pairs, i.e. the pairs have the hereditary property.

Throughout this work, in all figures, the edges of the tree $T$ of a representation $\rep$ are drawn as solid edges whereas the paths on the tree are shown by dashed, dotted, etc. edges. Similarly, edges of $\enptgp$ are drawn with solid or blue lines whereas edges in $E(\eptgp) \setminus E(\enptgp)$ are dashed or red. We sometimes refer to them as blue and red edges, respectively. For an edge $e=\set{p,q}$ we use $\split(e)$ as a shorthand for $\split(P_p,P_q)$. We note that $e$ is a red edge if and only if $\split(e) \neq \emptyset$.

\begin{figure}[htbp]
\centering
\scalebox{0.8} 
{
\begin{pspicture}(0,-2.238125)(11.862812,2.238125)
\psline[linewidth=0.04cm](0.5609375,0.4996875)(1.7209375,-0.5603125)
\psline[linewidth=0.04cm](1.7209375,-0.5603125)(0.5209375,-1.5403125)
\psline[linewidth=0.04cm](1.7209375,-0.6003125)(2.9209375,0.4996875)
\psline[linewidth=0.04cm](1.7409375,-0.5803125)(2.7409375,-1.5603125)
\psline[linewidth=0.04cm](2.9209375,0.5196875)(2.9609375,1.7996875)
\psline[linewidth=0.04cm](2.9409375,0.4796875)(4.3009377,0.4796875)
\psline[linewidth=0.04cm,linestyle=dashed,dash=0.16cm 0.16cm](0.7209375,0.5596875)(1.6609375,-0.3203125)
\psline[linewidth=0.04cm,linestyle=dashed,dash=0.16cm 0.16cm](1.6609375,-0.3203125)(2.4609375,0.3596875)
\psline[linewidth=0.04cm,linestyle=dotted,dotsep=0.16cm](2.0009375,-0.6403125)(3.0409374,0.2796875)
\psline[linewidth=0.04cm,linestyle=dotted,dotsep=0.16cm](3.0409374,0.2796875)(4.3609376,0.2996875)
\psline[linewidth=0.04cm,linestyle=dotted,dotsep=0.16cm](1.9809375,-0.6403125)(2.8809376,-1.4803125)
\psline[linewidth=0.04cm,linestyle=dotted,dotsep=0.16cm](2.0809374,0.2596875)(2.7209375,0.7196875)
\psline[linewidth=0.04cm,linestyle=dotted,dotsep=0.16cm](2.7209375,0.7196875)(2.8209374,1.8996875)
\psline[linewidth=0.04cm,linestyle=dotted,dotsep=0.16cm](0.5009375,0.2996875)(1.3809375,-0.5203125)
\psline[linewidth=0.04cm,linestyle=dotted,dotsep=0.16cm](1.4009376,-0.5403125)(0.3209375,-1.4403125)
\psline[linewidth=0.04cm,linestyle=dashed,dash=0.16cm 0.16cm](0.6609375,-1.7403125)(1.7609375,-0.8403125)
\psline[linewidth=0.04cm,linestyle=dashed,dash=0.16cm 0.16cm](1.7609375,-0.8403125)(2.6609375,-1.7003125)
\usefont{T1}{ptm}{m}{n}
\rput(0.9323437,-2.0103126){$P_1$}
\usefont{T1}{ptm}{m}{n}
\rput(0.43234375,0.2096875){$P_2$}
\usefont{T1}{ptm}{m}{n}
\rput(1.0723437,0.8096875){$P_3$}
\usefont{T1}{ptm}{m}{n}
\rput(3.0123436,-0.6303125){$P_4$}
\usefont{T1}{ptm}{m}{n}
\rput(3.2323437,2.0496874){$P_5$}
\psdots[dotsize=0.12](5.3709373,-0.2603125)
\psdots[dotsize=0.12](6.3309374,-0.2803125)
\psdots[dotsize=0.12](6.3509374,0.5196875)
\psdots[dotsize=0.12](5.8709373,1.1396875)
\psline[linewidth=0.03cm](5.3709373,-0.2603125)(6.3309374,-0.2603125)
\psdots[dotsize=0.12](5.3909373,0.5196875)
\psline[linewidth=0.03cm](6.3509374,0.5196875)(5.3909373,0.5196875)
\psline[linewidth=0.03cm](5.3709373,0.5196875)(5.3709373,-0.2203125)
\psline[linewidth=0.03cm](6.3309374,0.5196875)(6.3309374,-0.3003125)
\psline[linewidth=0.03cm](5.3909373,0.5396875)(5.8509374,1.1196876)
\psline[linewidth=0.03cm](5.8509374,1.1196876)(6.3509374,0.5196875)
\psline[linewidth=0.03cm](5.3709373,0.5196875)(6.3409376,-0.2803125)
\psline[linewidth=0.03cm](5.3709373,-0.2403125)(6.3509374,0.5196875)
\usefont{T1}{ptm}{m}{n}
\rput(6.612344,0.5496875){$4$}
\usefont{T1}{ptm}{m}{n}
\rput(5.1523438,0.4896875){$3$}
\usefont{T1}{ptm}{m}{n}
\rput(5.112344,-0.2703125){$2$}
\usefont{T1}{ptm}{m}{n}
\rput(6.632344,-0.2903125){$1$}
\usefont{T1}{ptm}{m}{n}
\rput(5.8723435,1.3696876){$5$}
\usefont{T1}{ptm}{m}{n}
\rput(5.7857814,-0.8748438){$\vptgp$}
\usefont{T1}{ptm}{m}{n}
\rput(8.323594,-0.8748438){$\eptgp$}
\usefont{T1}{ptm}{m}{n}
\rput(10.771563,-0.8748438){$\enptgp$}
\psdots[dotsize=0.12](7.8509374,-0.2403125)
\psdots[dotsize=0.12](8.810938,-0.2603125)
\psdots[dotsize=0.12](8.830937,0.5396875)
\psdots[dotsize=0.12](8.350938,1.1596875)
\psline[linewidth=0.03cm](7.8509374,-0.2403125)(8.810938,-0.2403125)
\psdots[dotsize=0.12](7.8709373,0.5396875)
\psline[linewidth=0.03cm](8.830937,0.5396875)(7.8709373,0.5396875)
\psline[linewidth=0.03cm](7.8509374,0.5396875)(7.8509374,-0.2003125)
\psline[linewidth=0.03cm](8.810938,0.5396875)(8.810938,-0.2803125)
\psline[linewidth=0.03cm](7.8709373,0.5596875)(8.330937,1.1396875)
\psline[linewidth=0.03cm](8.330937,1.1396875)(8.830937,0.5396875)
\usefont{T1}{ptm}{m}{n}
\rput(9.092343,0.5696875){$4$}
\usefont{T1}{ptm}{m}{n}
\rput(7.632344,0.5096875){$3$}
\usefont{T1}{ptm}{m}{n}
\rput(7.592344,-0.2503125){$2$}
\usefont{T1}{ptm}{m}{n}
\rput(9.112344,-0.2703125){$1$}
\usefont{T1}{ptm}{m}{n}
\rput(8.352344,1.3896875){$5$}
\psdots[dotsize=0.12](10.290937,-0.1803125)
\psdots[dotsize=0.12](11.250937,-0.2003125)
\psdots[dotsize=0.12](11.270938,0.5996875)
\psdots[dotsize=0.12](10.790937,1.2196875)
\psdots[dotsize=0.12](10.310938,0.5996875)
\psline[linewidth=0.03cm](10.310938,0.6196875)(10.770938,1.1996875)
\usefont{T1}{ptm}{m}{n}
\rput(11.532344,0.6296875){$4$}
\usefont{T1}{ptm}{m}{n}
\rput(10.072344,0.5696875){$3$}
\usefont{T1}{ptm}{m}{n}
\rput(10.032344,-0.1903125){$2$}
\usefont{T1}{ptm}{m}{n}
\rput(11.552343,-0.2103125){$1$}
\usefont{T1}{ptm}{m}{n}
\rput(10.792344,1.4496875){$5$}
\psdots[dotsize=0.12](1.7209375,-0.5603125)
\psdots[dotsize=0.12](2.9209375,0.4996875)
\psdots[dotsize=0.12](2.9609375,1.7996875)
\psdots[dotsize=0.12](4.3009377,0.4996875)
\psdots[dotsize=0.12](0.5809375,0.4796875)
\psdots[dotsize=0.12](2.7209375,-1.5403125)
\psdots[dotsize=0.12](0.5209375,-1.5403125)
\psdots[dotsize=0.12](2.3209374,-0.0403125)
\psdots[dotsize=0.12](2.5809374,0.1996875)
\end{pspicture} 
}
\caption{A host tree $T$, a collection of paths $\pp = \left\{ P_1, P_2, P_3, P_4, P_5 \right\} $ defined on $T$ and the corresponding graphs $\vptgp, \eptgp$ and $\enptgp$.}
\label{fig:vpt-ept-eptn}
\end{figure}
\runningtitle{Cycles, Chords, Holes, Outerplanar graphs, Trees:}
Given a graph $G$ and a cycle $C$ of it, a \emph{chord} of $C$ in $G$ is an edge of $E(G) \setminus E(C)$ connecting two vertices of $V(C)$. The \emph{length} of a chord connecting the vertices $i$,$j$ is the length of a shortest path between $i$ and $j$ on $C$. $C$ is a \emph{hole} (chordless cycle) of $G$ if $G$ does not contain any chord of $C$. This is equivalent to saying that the subgraph $G[V(C)]$ of $G$ induced by the vertices of $C$ is a cycle. For this reason a chordless cycle is also called an \emph{induced} cycle.

An \emph{outerplanar} graph is a planar graph that can be embedded in the plane such that all its vertices are on the unbounded face of the embedding. An outerplanar graph is Hamiltonian if and only if it is biconnected; in this case the unbounded face forms the unique Hamiltonian cycle. The \emph{weak dual} graph of a planar graph $G$ is the graph obtained from its dual graph by removing the vertex corresponding to the unbounded face of $G$. The weak dual graph of an outerplanar graph is a forest, and in particular the weak dual graph of a Hamiltonian outerplanar graph is a tree~\cite{CH67}. When working with outerplanar graphs we use the term \emph{face} to mean a bounded face.

\subsection{Preliminaries on $\ept$ graphs} \label{subsec:prelim-eptn-knownresults}
We now present definitions and results from \cite{GJ85} that we use throughout this work.

A \emph{pie} of a representation $\rep$ of an $\ept$ graph is an induced star $K_{1,k}$ of $T$ with $k$ leaves $v_0, v_1, \ldots, v_{k-1} \in V(T)$, and $k$ paths $P_0, P_1, \ldots P_{k-1} \in \pp$, such that for every $0 \leq i \leq k-1$ both $v_i$ and $v_{(i+1)\mod k}$ are vertices of $P_i$. We term the central vertex of the star as the \emph{center} of the pie (see Figure \ref{fig:ept-cycle}). It is easy to see that the $\ept$ graph of a pie with $k$ leaves is the hole $C_k$ on $k$ vertices. Moreover, this is the only possible $\ept$ representation of $C_k$ when $k \geq 4$:

\begin{figure}
\centering
\scalebox{0.7} 
{
\begin{pspicture}(0,-2.05)(4.24,2.05)
\psline[linewidth=0.04cm](0.0,1.73)(1.98,0.11)
\psline[linewidth=0.04cm](1.96,0.11)(1.98,2.01)
\psline[linewidth=0.04cm](1.98,0.11)(3.9,1.75)
\psline[linewidth=0.04cm](1.98,0.11)(4.18,-0.65)
\psline[linewidth=0.04cm](1.96,0.13)(3.68,-2.03)
\psdots[dotsize=0.12](0.8,-0.01)
\psdots[dotsize=0.12](1.24,-0.63)
\psdots[dotsize=0.12](1.94,-0.99)
\psline[linewidth=0.04,linestyle=dashed,dash=0.16cm 0.16cm](0.12,1.81)(1.88,0.33)(1.86,2.03)
\psline[linewidth=0.04,linestyle=dashed,dash=0.16cm 0.16cm](2.06,2.03)(2.06,0.33)(3.78,1.81)
\psline[linewidth=0.04,linestyle=dashed,dash=0.16cm 0.16cm](3.96,1.65)(2.2,0.19)(4.22,-0.51)
\psline[linewidth=0.04,linestyle=dashed,dash=0.16cm 0.16cm](4.1,-0.73)(2.28,-0.11)(3.74,-1.91)
\usefont{T1}{ptm}{m}{n}
\rput(1.4214063,1.5){$P_{n-1}$}
\usefont{T1}{ptm}{m}{n}
\rput(2.6114063,1.26){$P_0$}
\usefont{T1}{ptm}{m}{n}
\rput(3.5714064,0.44){$P_1$}
\usefont{T1}{ptm}{m}{n}
\rput(3.5514061,-0.94){$P_2$}
\end{pspicture} 
}
\caption{The only $\ept$ representation of a cycle is a pie.}
\label{fig:ept-cycle}
\end{figure}
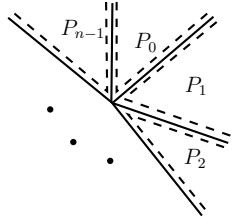

\begin{theorem}\cite{GJ85}\label{thm:golumbicpie}
If an $\ept$ graph contains a hole with $k \geq 4$ vertices, then every representation of it contains a pie with $k$ paths.
\end{theorem}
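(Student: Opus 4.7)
My plan is to establish a pie by identifying (a) a common vertex $c$ of $T$ lying on every path $P_i$ of the hole, and (b) distinct neighbors $v_0, \ldots, v_{k-1}$ of $c$ in $T$ such that $P_i$ contains both $v_i$ and $v_{(i+1) \bmod k}$.

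\textbf{Setup.} Let $P_0, \ldots, P_{k-1}$ be the paths corresponding to the vertices of the hole of size $k \geq 4$. Define $Q_i \defined P_i \cap P_{(i+1) \bmod k}$; since the intersection of two intersecting paths in a tree is a non-empty subpath, each $Q_i$ is a non-empty subpath of $T$. Because $P_{i-1}$ and $P_{i+1}$ are non-adjacent in the hole $C_k$ (this is where $k \geq 4$ is crucially used), they are edge-disjoint, so $Q_{i-1}$ and $Q_i$ are two edge-disjoint subpaths of $P_i$.

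\textbf{Step 1 (consecutive $Q$'s share a vertex).} I would first show that $V(Q_{i-1}) \cap V(Q_i)$ consists of exactly one vertex $w_i$, which then lies on $P_{i-1}$, $P_i$, and $P_{i+1}$. Being edge-disjoint subpaths of the path $P_i$, they share at most one vertex. For the lower bound, suppose for contradiction they are vertex-disjoint. Then the middle segment of $P_i$ between them is a non-empty set of edges lying only in $P_i$, and removing these edges from $T$ disconnects it so that $P_{i-1} \supseteq Q_{i-1}$ lies in one component while $P_{i+1} \supseteq Q_i$ lies in another. However, the chain $P_{i+2}, P_{i+3}, \ldots, P_{i-2}$ consists of paths edge-disjoint from $P_i$ (non-adjacency in $C_k$) linked by consecutive edge-intersections; inductively this chain must live in a single component of $T \setminus E(P_i)$, and extending it through $P_{i+1}$ and $P_{i-1}$ gives a contradiction.

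\textbf{Step 2 (all $w_i$ coincide at a center $c$).} Each $w_i$ is necessarily an endpoint of both $Q_{i-1}$ and $Q_i$. Chaining around the cycle, $w_i$ and $w_{i+1}$ are the two endpoints (possibly equal) of $Q_i$. If the $w_i$ are not all equal, partition them into maximal constant blocks; the ``jump'' $Q_i$'s, where $w_i \neq w_{i+1}$, are then non-trivial paths in $T$ joining distinct block-centers $c_1, c_2, \ldots, c_m$ in cyclic order. Non-consecutive jumps correspond to $Q_i$'s indexed by non-adjacent pairs in $C_k$ and are therefore edge-disjoint; this produces a cycle of edge-disjoint non-trivial paths in $T$, impossible since $T$ is a tree. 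The smallest cases $m = 2$ (two jumps with possibly a length-$1$ block between them) reduce to having two distinct edge-disjoint subpaths of $T$ with the same endpoints, again impossible. Hence all $w_i$ equal a single vertex $c$, which lies on every $P_i$.

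\textbf{Step 3 (the pie at $c$).} At $c$, path $P_i$ shares an edge with $P_{i-1}$ (coming from $Q_{i-1}$) and an edge with $P_{i+1}$ (coming from $Q_i$), and these two edges are distinct since $P_{i-1}$ and $P_{i+1}$ are edge-disjoint. Define $v_i$ as the neighbor of $c$ with $\{c, v_i\} \in E(P_{i-1}) \cap E(P_i)$. The $v_i$ are pairwise distinct: if $v_i = v_j$ for $i \neq j$, the edge $\{c, v_i\}$ would lie in $P_{i-1} \cap P_i$ and in $P_{j-1} \cap P_j$, producing an edge-intersection between two non-adjacent hole paths and contradicting the chord-free property. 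Then $c$ together with $\{v_0, \ldots, v_{k-1}\}$ induces a $K_{1,k}$ in $T$, and by construction $P_i$ contains both $v_i$ and $v_{i+1}$, giving the required pie.

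\textbf{Main obstacle.} The most delicate part is Step 2: although the overall ``cycle in a tree'' idea is clean, verifying that the jump paths are pairwise edge-disjoint requires careful bookkeeping of which pairs $(P_a, P_b)$ are non-adjacent in $C_k$, and the small $m$ cases (particularly $m=2$ with a block of length one) must be handled separately. Everything else is a direct consequence of the tree structure and of the chord-free assumption on the hole.
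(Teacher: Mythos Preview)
The paper does not supply its own proof of this theorem: it is quoted from \cite{GJ85} and used as a black box. So there is no ``paper's proof'' to compare against; I can only comment on the correctness of your argument and note that it follows the same line as the original Golumbic--Jamison proof.

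Your proof is correct. A small simplification is available in Step~2: you observe that non-consecutive jump paths are edge-disjoint, but in fact \emph{all} the $Q_i$ are pairwise edge-disjoint for $k\geq 4$, since $Q_i\cap Q_j\subseteq P_i\cap P_j$ (edge-empty when $i,j$ are non-adjacent in $C_k$) and $Q_i\cap Q_{i+1}\subseteq P_i\cap P_{i+2}$ (also edge-empty). With this, the block/jump analysis and the separate treatment of $m=2$ become unnecessary: the closed walk $w_1\to w_2\to\cdots\to w_k\to w_1$ in $T$ uses each edge at most once (pairwise edge-disjointness of the $Q_i$), while any closed walk in a tree uses every edge an even number of times; hence every step is trivial and all $w_i$ coincide. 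This also tightens Step~3, where the distinctness of the $v_i$ follows immediately from the pairwise edge-disjointness of the $Q_i$.
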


Let $\pp_e \defined \set{P \in \pp|~e \in P}$ be the set of paths in $\pp$ containing the edge $e$. A star $K_{1,3}$ is termed a \emph{claw}. For a claw $K$ of a tree $T$, $\pp[K] \defined \set{P \in \pp|~P \textrm{~uses two edges of~}K}$. It is easy to see that both $\eptg{\pp_e}$ and $\eptg{\pp[K]}$ are cliques. These cliques are termed \emph{edge-clique} and \emph{claw-clique}, respectively. Moreover, these are the only possible representations of cliques. We note that a claw-clique is a pie with $3$ leaves.

\begin{theorem}\cite{GJ85}\label{thm:golumbiccliques}
Any maximal clique of an $\ept$ graph with representation $\rep$ corresponds to a subcollection $\pp_e$ of paths for some edge $e$ of $T$, or to a subcollection $\pp[K]$ of paths for some claw $K$ of $T$.
\end{theorem}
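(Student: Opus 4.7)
The plan is to exploit the Helly property of subtrees of a tree. Since paths in $\pp$ are subtrees of $T$, and any two paths in a maximal clique $Q$ of $\eptgp$ share an edge (hence share a vertex), the paths in $Q$ pairwise vertex-intersect. By the Helly property for subtrees of a tree, there exists a vertex $v \in V(T)$ that lies on every path of $Q$. We will then analyze the local structure of the paths at $v$ and argue that it forces $Q$ to be contained in either $\pp_e$ for some edge $e$ incident to $v$, or in $\pp[K]$ for some claw $K$ centered at $v$.

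For each path $P_i \in Q$, let $S_i$ be the set of edges of $T$ incident to $v$ that belong to $P_i$; since $P_i$ is a path through $v$ (or ending at $v$), we have $|S_i| \in \{1,2\}$. A key observation is that for any two paths $P_i, P_j \in Q$, we have $S_i \cap S_j \neq \emptyset$: their intersection $P_i \cap P_j$ is a connected subpath of $T$ containing both $v$ and some common edge, so it contains the path in $T$ from $v$ to that edge, and in particular its first edge, which is an element of $S_i \cap S_j$. Thus the family $\{S_i\}$ consists of pairwise intersecting subsets of size at most $2$ of the set of edges incident to $v$.

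Now I split into cases. If some $S_i = \{e\}$ is a singleton, then every $S_j$ must contain $e$, so every $P_j \in Q$ uses the edge $e$, giving $Q \subseteq \pp_e$. Otherwise all $|S_i| = 2$. If these $2$-element sets have a common element $e$, then again $Q \subseteq \pp_e$. Else, the pairwise-intersecting $2$-subsets have empty total intersection, and a standard combinatorial observation forces them all to lie inside a common $3$-element set $\{e_1,e_2,e_3\}$ of edges incident to $v$ (since any $2$-subset disjoint from an existing intersecting triple $\{a,b\},\{a,c\},\{b,c\}$ leads to a contradiction). In particular $d_T(v) \geq 3$, and $v$ together with $e_1,e_2,e_3$ forms a claw $K$ with $Q \subseteq \pp[K]$.

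To finish, note that $\pp_e$ and $\pp[K]$ are themselves cliques in $\eptgp$: any two paths in $\pp_e$ share the edge $e$, and any two paths in $\pp[K]$ each use two of the three edges of $K$, so by pigeonhole they share an edge of $K$. Hence $Q \subseteq \pp_e$ (resp.\ $Q \subseteq \pp[K]$) together with maximality of $Q$ forces equality. The main technical step is the set-theoretic claim that pairwise intersecting $2$-subsets with empty common intersection must be supported on three elements; the Helly application and the connectedness argument for $S_i \cap S_j \neq \emptyset$ are the other substantive ingredients, but both are standard for subtrees of a tree.
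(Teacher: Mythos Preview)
Your proof is correct. Note, however, that the paper does not actually prove this theorem: it is quoted as a result of Golumbic and Jamison \cite{GJ85} and stated without argument, so there is no ``paper's own proof'' to compare against. Your argument---applying the Helly property of subtrees to find a common vertex $v$, then analyzing the pairwise-intersecting family of $1$- or $2$-element edge sets $S_i$ at $v$---is essentially the classical proof from \cite{GJ85}. The one step worth stating slightly more carefully is the claim that $S_i \cap S_j \neq \emptyset$: you correctly observe that $P_i \cap P_j$ is a subpath of $T$ (intersection of two subtrees of a tree is a subtree, and a subtree of a path is a path) containing both $v$ and at least one edge, hence containing an edge incident to $v$. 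The combinatorial sunflower-type claim that pairwise-intersecting $2$-sets with empty total intersection are supported on three elements is also handled correctly. Finally, your appeal to maximality to upgrade $Q \subseteq \pp_e$ (resp.\ $Q \subseteq \pp[K]$) to equality is valid since both $\pp_e$ and $\pp[K]$ are cliques of $\eptgp$.
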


\subsection{Some $\enpt$ graphs}\label{subsec:prelim-some-enpt-graphs}
In this section we show that trees, cycles and cliques are contained in the family of $\enpt$ graphs, and give a complete characterization of the $\enpt$ representations of cliques:

\begin{lemma}\label{lem:eptn-clique}
Every clique $K$ of $\enptgp$ corresponds to an edge-clique, such that the union of the paths representing $K$ is a path.
\end{lemma}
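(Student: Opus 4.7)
The plan is to combine Golumbic and Jamison's classification of $\ept$ maximal cliques (Theorem \ref{thm:golumbiccliques}) with the non-splitting condition.

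Since $E(\enptgp) \subseteq E(\eptgp)$, the clique $K$ of $\enptgp$ is also a clique of $\eptgp$, so by Theorem \ref{thm:golumbiccliques} it is contained in either an edge-clique $\pp_e$ or a claw-clique $\pp[K_{1,3}]$. In the claw-clique case with center $c$ and leaves $v_1,v_2,v_3$, each path of $K$ uses exactly two of the three edges $\{c,v_i\}$. If two paths $P,Q \in K$ used different pairs, then $P \cup Q$ would cover all three claw edges at $c$, giving $c \in \split(P,Q)$ and contradicting $P \sim Q$. Hence all paths of $K$ use the same pair of claw edges, so $K$ is contained in an edge-clique $\pp_e$ after all.

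So assume $K \subseteq \pp_e$. Every path of $K$ contains $e$, giving the edge-clique conclusion immediately. To show that the union $U$ of the paths in $K$ is a path, I would argue that $U$ is a connected subtree of $T$ of maximum degree~$2$. Connectedness follows since every path contains $e$. For the degree bound, suppose some $w \in V(U)$ satisfies $d_U(w) \geq 3$, and pick three distinct edges $f_1,f_2,f_3$ of $U$ incident to $w$, each lying in some path $P_i \in K$. For every pair $P_i,P_j$, the condition $P_i \sim P_j$ forces $d_{P_i \cup P_j}(w) \leq 2$, which means $P_i$ and $P_j$ together use only $\{f_i,f_j\}$ at $w$. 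Applying this to all three pairs shows that $P_i$ uses only $f_i$ at $w$, so $w$ is an endpoint of each $P_i$. But then $P_1$ and $P_2$ are paths ending at $w$ and lying in disjoint components of $T\setminus w$, so they share no edge, contradicting $P_1 \sim P_2$. Hence $d_U(w) \leq 2$ everywhere and $U$ is a path.

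The main obstacle I expect is the final degree-bound step: one has to combine the non-splitting conditions on all three pairs simultaneously to force each $P_i$ to terminate at $w$, and only then does the tree structure of $T$ yield the contradiction via disjoint subtrees.
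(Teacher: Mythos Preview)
Your proof is correct and follows essentially the same approach as the paper: reduce to Theorem~\ref{thm:golumbiccliques}, rule out the genuine claw-clique case via a split at the center, and then argue that a junction in the union of the clique paths would force two of them to split. The paper compresses your second part into a single line (``the union contains a split vertex, i.e.\ two paths $P_p \nsim P_q$''), whereas you spell out explicitly---via the three pairwise degree bounds at $w$---why such a junction must arise as a split vertex of some pair; your version is more careful but not a different idea.
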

\begin{proof}
$\enptgp$ is a subgraph of $\eptgp$. Therefore a clique $K$ of $\enptgp$ is a clique of $\eptgp$. By Theorem \ref{thm:golumbiccliques}, $K$ corresponds to either an edge-clique or a claw-clique. Assume, by way of contradiction that $K$ does not correspond to an edge-clique. A claw-clique that is not an edge-clique, contains two paths $P_p, P_q$ each of which uses a different pair of the three edges of the claw. Therefore $P_p \nsim P_q$, i.e. $\set{p,q} \notin E(K)$, a contradiction. Therefore $K$ corresponds to an edge-clique. To show the second part of the claim, assume that the union of the paths corresponding to the vertices of $K$ is not a path. Then it contains at least one split vertex, i.e. it contains two paths $P_p,P_q$ such that $P_p \nsim P_q$, i.e. $\set{p,q} \notin E(K)$, a contradiction.
\end{proof}

A direct consequence of Lemma \ref{lem:eptn-clique} is that the maximum clique problem in $\enpt$ graphs can be solved in polynomial time. Let $G$ be an $\enpt$ graph and $\rep$ be an $\enpt$ representation for $G$. Consider an edge $e$ of $T$, the union of paths in $\pp_e$ induces a subtree $T_e$ of $T$. Let $l_1, l_2, \ldots, l_k \in V(T)$ be the leaves of $T_e$. Let $\pp_e^{l_i,l_j} \defined \set{P \in \pp_e|P \subseteq p_T(l_i,l_j) }$. The maximal cliques of $G$ correspond to the sets $\pp_e^{l_i,l_j}$. Therefore, there are at most $O(V(T)^3)$ maximal cliques in $G$. We conclude that (even if a representation $\rep$ is not given) a maximum clique can be found using a clique enumeration algorithm, e.g. \cite{TsukiyamaIdeAriyoshiShirakawa77}, since there are only a polynomial number of maximal cliques.

\begin{lemma}\label{lem:eptn-tree}
Every tree is an $\enpt$ graph.
\end{lemma}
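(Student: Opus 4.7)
The plan is to give an explicit construction rather than to proceed by induction. Root $H$ at an arbitrary vertex $r$ and, for each non-root vertex $v$, let $p(v)$ denote its parent. Construct $T$ with vertex set $\{\tau_0\} \cup \{\tau_v : v \in V(H)\}$ and edges $e_r = \{\tau_0, \tau_r\}$ together with $e_v = \{\tau_{p(v)}, \tau_v\}$ for each non-root $v$; thus $T$ is a copy of $H$ with one extra pendant edge attached at the root, and $x \mapsto e_x$ is a bijection between $V(H)$ and $E(T)$. Define $\pp$ by setting $P_r = \{e_r\}$ and, for every non-root $v$, $P_v = \{e_{p(v)}, e_v\}$; this is indeed a path in $T$ since $e_{p(v)}$ and $e_v$ share the vertex $\tau_{p(v)}$.

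To verify $\enptgp = H$, I would consider a pair of distinct vertices $u, v \in V(H)$ and split into three cases. If $u$ and $v$ are adjacent in $H$, say $u = p(v)$, then $P_u$ and $P_v$ share $e_u$, and $P_u \cup P_v$ is obtained from $P_u$ by appending the single edge $e_v$ at the endpoint $\tau_u$, keeping the union a path; hence $u \sim v$ in $\enptgp$. If instead $u$ and $v$ are siblings in $H$ with $p(u) = p(v) = q$, then $P_u \cap P_v = \{e_q\}$ but the three distinct edges $e_q, e_u, e_v$ are all incident to $\tau_q$ in the union, producing a split vertex of degree three, so $u \not\sim v$ in $\enptgp$. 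In the remaining case, the labels $p(u), u, p(v), v$ are pairwise distinct, and by injectivity of $x \mapsto e_x$ the paths $P_u$ and $P_v$ are edge-disjoint, giving $u \not\sim v$ again.

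The only step requiring care is the third case, where one must check that excluding both adjacency and siblinghood in $H$ forces the four indexing labels $p(u), u, p(v), v$ to be pairwise distinct; this is immediate from the case hypothesis (together with $u \neq v$), so the verification is routine and the construction yields a valid $\enpt$ representation of $H$.
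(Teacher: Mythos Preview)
Your proof is correct and takes essentially the same approach as the paper: root the tree, attach a pendant at the root, and represent each vertex by the short upward path toward its ancestor(s). The paper's variant attaches two pendant edges so that \emph{every} path (including the root's) has length exactly two, which lets the verification be phrased uniformly; your version instead special-cases $P_r$ as a single edge, and one should note that in your third case the phrasing ``$p(u),u,p(v),v$ pairwise distinct'' tacitly assumes $u,v\neq r$, though the edge-disjointness conclusion is immediate when one of them is the root.
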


\begin{proof}
Given a tree $T'$, the following procedure provides an $\enpt$ representation $\rep$ of $T'$: 1) $T \leftarrow T'$, 2) choose an arbitrary vertex $r$ as the root of $T$ and hang $T$ from $r$, 3) add two vertices $\bar{r}, \bar{\bar{r}}$ and two edges $\left\{ \bar{\bar{r}},\bar{r} \right\} \left\{ \bar{r},r \right\}$ to $T$, 4) $\pp=\set{P_v|~v \in T'}$ where $P_v$ is a path of length $2$ between $v$ and its ancestor at distance $2$. It remains to show that $\set{u,v} \in T'$ if and only if $P_u \sim P_v$. Indeed, let $\set{u,v} \in T'$, and assume without loss of generality that $u$ is the parent of $v$ in $T$. Then $P_u$ edge-intersects $P_v$ because they both use the edge connecting $u$ to its parent. Moreover they do not split, because their union is the path from $v$ to its ancestor at distance $3$. Therefore $P_u \sim P_v$. Conversely, assume that $P_u \sim P_v$. Then $P_u$ and $P_v$ edge-intersect. As every vertex is a starting vertex of at most one path and the paths are of length $2$, the second edge of one of the paths, say $P_v$ is the first edge of $P_u$, therefore $u$ is the parent of $v$ in $T$, i.e. $\set{u,v} \in T'$.
\end{proof}

Let $T$ be a tree with $k$ leaves and $\pi=(\pi_0,\ldots,\pi_{k-1})$ a cyclic permutation of the leaves. The \emph{tour} $(T,\pi)$ is the following set of $2k$ paths: $(T,\pi)$ contains $k$ \emph{long} paths, each of which connecting two consecutive leaves $\pi_i, \pi_{i+1 \mod k}$. $(T,\pi)$ contains $k$ \emph{short} paths, each of which connecting a leaf $\pi_i$ and its unique neighbor in $T$ (see Figure \ref{fig:eptn-holes}-c). Note that $\enpt((T,\pi))$ is a cycle.

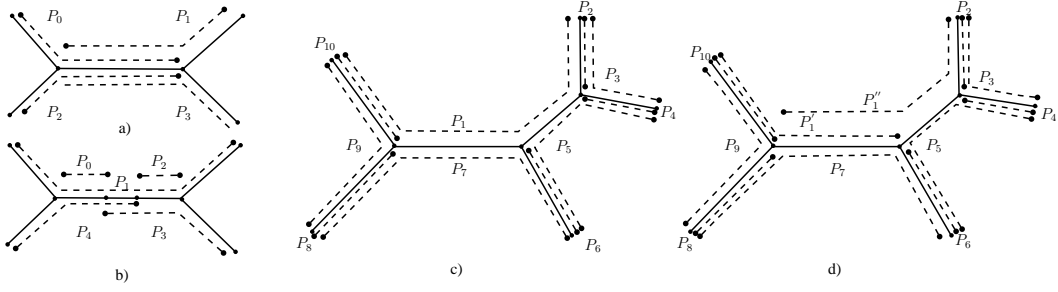
\begin{figure}[htbp]
\centering
\scalebox{0.5} 
{
\begin{pspicture}(0,-3.7676563)(28.261875,3.7676563)
\psline[linewidth=0.04cm](8.68,2.2092187)(10.36,-0.09078125)
\psline[linewidth=0.04cm](10.36,-0.09078125)(8.12,-2.3907812)
\psline[linewidth=0.04cm](10.34,-0.11078125)(13.72,-0.11078125)
\psline[linewidth=0.04cm](13.72,-0.11078125)(15.3,1.2492187)
\psline[linewidth=0.04cm](15.3,1.2492187)(15.28,3.3292189)
\psline[linewidth=0.04cm](15.3,1.2692188)(17.36,0.88921875)
\psline[linewidth=0.04cm](13.72,-0.11078125)(15.08,-2.5107813)
\psline[linewidth=0.04,linestyle=dashed,dash=0.16cm 0.16cm,dotsize=0.07055555cm 2.0]{**-**}(15.62,3.3692188)(15.64,1.4492188)(16.98,1.2092187)(17.46,1.1092187)
\psline[linewidth=0.04,linestyle=dashed,dash=0.16cm 0.16cm,dotsize=0.07055555cm 2.0]{**-**}(14.98,-2.6107812)(13.68,-0.41078126)(10.52,-0.41078126)(8.84,-2.1307812)(8.4,-2.5707812)
\psline[linewidth=0.04,linestyle=dashed,dash=0.16cm 0.16cm,dotsize=0.07055555cm 2.0]{**-**}(8.02,-2.2307813)(10.04,-0.09078125)(8.92,1.5292188)(8.52,2.1092188)
\psline[linewidth=0.04cm,fillcolor=black,linestyle=dashed,dash=0.16cm 0.16cm,dotsize=0.07055555cm 2.0]{**-**}(10.46,0.04921875)(8.78,2.3492188)
\psline[linewidth=0.04cm,fillcolor=black,linestyle=dashed,dash=0.16cm 0.16cm,dotsize=0.07055555cm 2.0]{**-**}(15.42,1.4092188)(15.4,3.3692188)
\psline[linewidth=0.04cm,linestyle=dashed,dash=0.16cm 0.16cm,dotsize=0.07055555cm 2.0]{**-**}(15.34,1.1692188)(17.34,0.7892187)
\psline[linewidth=0.04cm,fillcolor=black,linestyle=dashed,dash=0.16cm 0.16cm,dotsize=0.07055555cm 2.0]{**-**}(13.88,-0.15078124)(15.24,-2.4507813)
\psline[linewidth=0.04cm,linestyle=dashed,dash=0.16cm 0.16cm,dotsize=0.07055555cm 2.0]{**-**}(10.36,-0.25078124)(8.16,-2.5507812)
\usefont{T1}{ptm}{m}{n}
\rput(15.431406,3.5792189){$P_2$}
\usefont{T1}{ptm}{m}{n}
\rput(16.151405,1.7992188){$P_3$}
\usefont{T1}{ptm}{m}{n}
\rput(17.631407,0.7992188){$P_4$}
\usefont{T1}{ptm}{m}{n}
\rput(14.851406,-0.10078125){$P_5$}
\usefont{T1}{ptm}{m}{n}
\rput(15.631406,-2.7407813){$P_6$}
\usefont{T1}{ptm}{m}{n}
\rput(12.071406,-0.72078127){$P_7$}
\usefont{T1}{ptm}{m}{n}
\rput(7.9514065,-2.7007813){$P_8$}
\usefont{T1}{ptm}{m}{n}
\rput(9.271406,-0.08078125){$P_9$}
\usefont{T1}{ptm}{m}{n}
\rput(8.501407,2.6392188){$P_{10}$}
\psline[linewidth=0.04,linestyle=dashed,dash=0.16cm 0.16cm,dotsize=0.07055555cm 2.0]{**-**}(17.32,0.58921874)(15.3,1.0292188)(14.08,-0.13078125)(15.36,-2.3507812)
\usefont{T1}{ptm}{m}{n}
\rput(12.051406,0.55921876){$P_1$}
\psline[linewidth=0.04cm](18.72,2.1892188)(20.42,-0.09078125)
\psline[linewidth=0.04cm](20.42,-0.09078125)(18.22,-2.3707812)
\psline[linewidth=0.04cm](20.4,-0.11078125)(23.78,-0.11078125)
\psline[linewidth=0.04cm](23.78,-0.11078125)(25.36,1.2492187)
\psline[linewidth=0.04cm](25.36,1.2492187)(25.32,3.3492188)
\psline[linewidth=0.04cm](25.36,1.2692188)(27.44,0.94921875)
\psline[linewidth=0.04cm](23.78,-0.11078125)(25.16,-2.4907813)
\psline[linewidth=0.04,linestyle=dashed,dash=0.16cm 0.16cm,dotsize=0.07055555cm 2.0]{**-**}(25.62,3.3892188)(25.62,1.4892187)(26.98,1.2492187)(27.52,1.1892188)
\psline[linewidth=0.04,linestyle=dashed,dash=0.16cm 0.16cm,dotsize=0.07055555cm 2.0]{**-**}(24.88,-2.5707812)(23.656076,-0.33078125)(20.664951,-0.3700473)(18.913568,-2.0388546)(18.42,-2.5507812)
\psline[linewidth=0.04,linestyle=dashed,dash=0.16cm 0.16cm,dotsize=0.07055555cm 2.0]{**-**}(18.04,-2.2107813)(20.12,-0.05078125)(19.04,1.4092188)(18.56,1.9892187)
\psline[linewidth=0.04cm,fillcolor=black,linestyle=dashed,dash=0.16cm 0.16cm,dotsize=0.07055555cm 2.0]{**-**}(20.5,0.02921875)(18.82,2.3092186)
\psline[linewidth=0.04cm,fillcolor=black,linestyle=dashed,dash=0.16cm 0.16cm,dotsize=0.07055555cm 2.0]{**-**}(25.5,1.4092188)(25.44,3.3892188)
\psline[linewidth=0.04cm,fillcolor=black,linestyle=dashed,dash=0.16cm 0.16cm,dotsize=0.07055555cm 2.0]{**-**}(25.44,1.1292187)(27.4,0.7892187)
\psline[linewidth=0.04cm,fillcolor=black,linestyle=dashed,dash=0.16cm 0.16cm,dotsize=0.07055555cm 2.0]{**-**}(23.98,-0.19078125)(25.32,-2.4307814)
\psline[linewidth=0.04cm,fillcolor=black,linestyle=dashed,dash=0.16cm 0.16cm,dotsize=0.07055555cm 2.0]{**-**}(20.46,-0.33078125)(18.3,-2.4707813)
\usefont{T1}{ptm}{m}{n}
\rput(25.491405,3.5792189){$P_2$}
\usefont{T1}{ptm}{m}{n}
\rput(26.091406,1.7192187){$P_3$}
\usefont{T1}{ptm}{m}{n}
\rput(27.751406,0.7392188){$P_4$}
\usefont{T1}{ptm}{m}{n}
\rput(24.691406,-0.08078125){$P_5$}
\usefont{T1}{ptm}{m}{n}
\rput(25.471407,-2.6207812){$P_6$}
\usefont{T1}{ptm}{m}{n}
\rput(22.131407,-0.72078127){$P_7$}
\usefont{T1}{ptm}{m}{n}
\rput(18.071405,-2.7007813){$P_8$}
\usefont{T1}{ptm}{m}{n}
\rput(19.331406,-0.08078125){$P_9$}
\usefont{T1}{ptm}{m}{n}
\rput(18.481407,2.3792188){$P_{10}$}
\psline[linewidth=0.04,linestyle=dashed,dash=0.16cm 0.16cm,dotsize=0.07055555cm 2.0]{**-**}(27.36,0.6092188)(25.28,1.0092187)(24.06,-0.01078125)(25.48,-2.3707812)
\psline[linewidth=0.04,linestyle=dashed,dash=0.16cm 0.16cm,dotsize=0.07055555cm 2.0]{**-**}(18.98,2.3892188)(20.52,0.18921874)(22.24,0.16921875)(23.8,0.16921875)
\psline[linewidth=0.04,linestyle=dashed,dash=0.16cm 0.16cm,dotsize=0.07055555cm 2.0]{**-**}(20.62,0.80921876)(23.86,0.82921875)(25.08,1.8292187)(25.08,3.3492188)
\usefont{T1}{ptm}{m}{n}
\rput(21.331406,0.5392187){$P_1^{'}$}
\usefont{T1}{ptm}{m}{n}
\rput(23.051407,1.1192187){$P_1^{''}$}
\psline[linewidth=0.04,linestyle=dashed,dash=0.16cm 0.16cm,dotsize=0.07055555cm 2.0]{**-**}(9.0,2.3892188)(10.52,0.30921876)(13.64,0.28921875)(14.98,1.3692187)(14.96,3.3692188)
\usefont{T1}{ptm}{m}{n}
\rput(12.006406,-3.4007812){c)}
\usefont{T1}{ptm}{m}{n}
\rput(22.056562,-3.4007812){d)}
\psdots[dotsize=0.12](8.68,2.1892188)
\psdots[dotsize=0.12](10.34,-0.09078125)
\psdots[dotsize=0.12](8.16,-2.3707812)
\psdots[dotsize=0.12](15.06,-2.4707813)
\psdots[dotsize=0.12](13.72,-0.11078125)
\psdots[dotsize=0.12](15.3,1.2692188)
\psdots[dotsize=0.12](17.32,0.9092187)
\psdots[dotsize=0.12](15.28,3.3092186)
\psdots[dotsize=0.12](18.76,2.1492188)
\psdots[dotsize=0.12](20.42,-0.09078125)
\psdots[dotsize=0.12](18.22,-2.3707812)
\psdots[dotsize=0.12](23.78,-0.11078125)
\psdots[dotsize=0.12](25.16,-2.4707813)
\psdots[dotsize=0.12](27.38,0.96921873)
\psdots[dotsize=0.12](25.32,3.3292189)
\psline[linewidth=0.04cm](6.26,-0.07078125)(4.66,-1.4907813)
\psline[linewidth=0.04cm](4.66,-1.4907813)(6.12,-2.9107811)
\psline[linewidth=0.04cm](4.66,-1.4907813)(1.32,-1.4707812)
\psline[linewidth=0.04cm](0.1,-0.05078125)(1.34,-1.4707812)
\psline[linewidth=0.04cm](1.34,-1.4907813)(0.06,-2.7107813)
\psline[linewidth=0.04,linestyle=dashed,dash=0.16cm 0.16cm,dotsize=0.07055555cm 2.0]{**-**}(0.28,-0.01078125)(1.46,-1.2707813)(4.66,-1.2707813)(6.12,0.04921875)
\psline[linewidth=0.04,linestyle=dashed,dash=0.16cm 0.16cm,dotsize=0.07055555cm 2.0]{**-**}(0.22,-2.8707812)(1.6312675,-1.6307813)(3.56,-1.6307813)
\psline[linewidth=0.04,linestyle=dashed,dash=0.16cm 0.16cm,dotsize=0.07055555cm 2.0]{**-**}(2.58,-1.8907813)(4.5984616,-1.8707813)(5.86,-2.9907813)
\psline[linewidth=0.04cm,fillcolor=black,linestyle=dashed,dash=0.16cm 0.16cm,dotsize=0.07055555cm 2.0]{**-**}(1.48,-0.85078126)(2.8,-0.85078126)
\psline[linewidth=0.04cm,fillcolor=black,linestyle=dashed,dash=0.16cm 0.16cm,dotsize=0.07055555cm 2.0]{**-**}(3.5,-0.8907812)(4.72,-0.87078124)
\psdots[dotsize=0.12](0.12,-0.07078125)
\psdots[dotsize=0.12](0.06,-2.7107813)
\psdots[dotsize=0.12](6.12,-2.9107811)
\psdots[dotsize=0.12](6.26,-0.07078125)
\psdots[dotsize=0.12](2.68,-1.4707812)
\usefont{T1}{ptm}{m}{n}
\rput(2.0914063,-0.5207813){$P_0$}
\usefont{T1}{ptm}{m}{n}
\rput(3.1114063,-1.0807812){$P_1$}
\usefont{T1}{ptm}{m}{n}
\rput(4.0914063,-0.5207813){$P_2$}
\psdots[dotsize=0.12](3.5,-1.4707812)
\usefont{T1}{ptm}{m}{n}
\rput(4.0914063,-2.4007812){$P_3$}
\usefont{T1}{ptm}{m}{n}
\rput(2.0914063,-2.4007812){$P_4$}
\usefont{T1}{ptm}{m}{n}
\rput(3.1029687,-3.5407813){b)}
\psline[linewidth=0.04cm](6.32,3.3692188)(4.72,1.9492188)
\psline[linewidth=0.04cm](4.72,1.9492188)(6.18,0.52921873)
\psline[linewidth=0.04cm](4.72,1.9492188)(1.38,1.9692187)
\psline[linewidth=0.04cm](0.16,3.3892188)(1.4,1.9692187)
\psline[linewidth=0.04cm](1.4,1.9492188)(0.12,0.7292187)
\psdots[dotsize=0.12](0.18,3.3692188)
\psdots[dotsize=0.12](0.12,0.7292187)
\psdots[dotsize=0.12](6.18,0.52921873)
\psdots[dotsize=0.12](6.32,3.3692188)
\usefont{T1}{ptm}{m}{n}
\rput(1.3114063,3.3592188){$P_0$}
\usefont{T1}{ptm}{m}{n}
\rput(4.731406,3.3592188){$P_1$}
\usefont{T1}{ptm}{m}{n}
\rput(1.3114063,0.77921873){$P_2$}
\usefont{T1}{ptm}{m}{n}
\rput(4.731406,0.77921873){$P_3$}
\usefont{T1}{ptm}{m}{n}
\rput(3.1860938,0.35921875){a)}
\psdots[dotsize=0.12](1.32,-1.4907813)
\psdots[dotsize=0.12](4.7,-1.5107813)
\psline[linewidth=0.04,linestyle=dashed,dash=0.16cm 0.16cm,dotsize=0.07055555cm 2.0]{**-**}(0.36,3.4492188)(1.5,2.1892188)(4.64,2.1892188)
\psline[linewidth=0.04,linestyle=dashed,dash=0.16cm 0.16cm,dotsize=0.07055555cm 2.0]{**-**}(1.54,2.5692186)(4.72,2.5692186)(5.88,3.5892189)
\psline[linewidth=0.04,linestyle=dashed,dash=0.16cm 0.16cm,dotsize=0.07055555cm 2.0]{**-**}(0.46,0.76921874)(1.48,1.7492187)(4.68,1.7692188)
\psline[linewidth=0.04,linestyle=dashed,dash=0.16cm 0.16cm](1.58,1.5292188)(4.74,1.5492188)(5.96,0.38921875)
\psdots[dotsize=0.12](1.4,1.9692187)
\psdots[dotsize=0.12](4.72,1.9492188)
\psdots[dotsize=0.12](25.38,1.2492187)
\end{pspicture} 
}
\caption{a) A minimal representation of $C_4$ b) A minimal representation of $C_5$ c) A tour representation of the even hole $C_{10}$, d) A representation of the odd hole $C_{11}$.}
\label{fig:eptn-holes}
\end{figure}

A \emph{planar embedding} of a tour is a planar embedding of the underlying tree such that any two paths of the tour do not cross each other. A tour is \emph{planar} if there exists a planar embedding of it. The tour in Figure \ref{fig:eptn-holes}-c is a planar embedding of a tour. Note that a tour $(T,\pi)$ is planar if and only if $\pi$ corresponds to the order in which the leaves are encountered by some DFS traversal of $T$.

\begin{lemma}\label{lem:eptn-ring}
Every cycle $C_k$ is an $\enpt$ graph.
\end{lemma}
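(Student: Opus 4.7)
The plan is to give an explicit $\enpt$ representation of $C_k$ for every $k\ge 3$. The small cases $k\in\{3,4,5\}$ are handled by direct constructions, and all $k\ge 6$ by a tour on a ``subdivided star''.

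\textbf{Small cases.} For $k=3$, take $T$ as the path $v_0v_1v_2v_3$ and $\pp=\{v_0v_1v_2,\, v_1v_2v_3,\, v_0v_1v_2v_3\}$; every pair shares an edge and has a sub-path of $T$ as its union, so $\enptgp=K_3=C_3$. For $k=4$, take $T$ as a double-star with spine edge $\{u,v\}$ and pendant leaves $\ell_0,\ell_1$ at $u$ and $\ell_2,\ell_3$ at $v$, together with the four paths $\ell_iuv$ ($i\in\{0,1\}$) and $uv\ell_j$ ($j\in\{2,3\}$): the two paths based at $u$ (resp.\ $v$) split at $u$ (resp.\ $v$), while every ``crossing'' pair shares $\{u,v\}$ and has a sub-path as its union, yielding $\enptgp=K_{2,2}=C_4$. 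The case $k=5$ is handled by a small explicit construction on the same double-star, as in Figure~\ref{fig:eptn-holes}(b).

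\textbf{General even case.} For $k=2m\ge 6$, let $T$ be the spider with center $c$ and $m$ arms of length $2$, with $i$-th arm $c\mu_i\ell_i$. The tour $(T,\pi)$ for $\pi=(\ell_0,\dots,\ell_{m-1})$ consists of the $m$ short paths $S_i=\ell_i\mu_i$ and the $m$ long paths $L_i=\ell_i\mu_i c\mu_{(i+1)\bmod m}\ell_{(i+1)\bmod m}$. A case analysis shows that the only ENPT-adjacencies are $S_i\sim L_i$ and $S_i\sim L_{(i-1)\bmod m}$: two short paths share no edge; $S_i$ and $L_j$ edge-intersect precisely when $j\in\{i,(i-1)\bmod m\}$, and their union is then $L_j$, a path; and any two distinct long paths that edge-intersect share at least one edge incident to $c$ and their union contains three edges incident to $c$, so it splits at $c$. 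Thus $\enpt((T,\pi))=C_{2m}$.

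\textbf{General odd case and main obstacle.} For odd $k=2m+1\ge 7$, I modify the above construction by replacing the long path $L_0$ with two overlapping sub-paths $L_0'=\ell_0\mu_0c\mu_1$ and $L_0''=\mu_0c\mu_1\ell_1$. Their union is $L_0$ and they share the two edges $\{\mu_0,c\},\{c,\mu_1\}$, so $L_0'\sim L_0''$; since only $L_0'$ uses $\{\ell_0,\mu_0\}$ and only $L_0''$ uses $\{\mu_1,\ell_1\}$, the adjacencies $S_0\sim L_0,\,S_1\sim L_0$ become $S_0\sim L_0',\,S_1\sim L_0''$, while every other pair involving $L_0'$ or $L_0''$ either fails to share an edge or has three edges incident to $c$ in its union and therefore splits. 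This inserts one vertex $L_0''$ into the cycle between $L_0'$ and $S_1$, yielding $\enpt=C_{2m+1}$. The main subtle point — and the reason $k=4,5$ require separate treatment — is that the spider construction fails for $m=2$: the spider then degenerates into a path and the two ``long paths'' $L_0,L_1$ coincide, so the tour does not produce $2m$ distinct paths.
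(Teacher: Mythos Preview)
Your proof is correct and follows essentially the same approach as the paper: small cases $k\le 5$ handled separately, even $k\ge 6$ via a tour, and odd $k\ge 7$ by splitting one long path of the tour into two overlapping sub-paths. The only difference is that the paper states the even case for an arbitrary tree with $k/2$ leaves, whereas you work with the concrete subdivided star; your choice has the minor bonus that no edge subdivision is needed in the odd case, and you spell out the adjacency check that the paper leaves to the reader.
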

\begin{proof}
$C_3=K_3$ is an $\enpt$ graph by Lemma \ref{lem:eptn-clique}. As for $C_4$ and $C_5$, possible $\enpt$ representations are shown in Figure \ref{fig:eptn-holes}-(a,b), respectively. Any even hole $C_{2k}$, $(k \geq 3)$ is an $\enpt$ graph. Indeed, for any tree $T$ with $k$ leaves, and a cyclic permutation $\pi$ of its leaves, the tour $(T,\pi)$ constitutes an $\enpt$ representation of $C_{2k}$. Any odd hole $C_{2k+1}$, $(k \geq 3)$ is an $\enpt$ graph. Let $T$ be a tree with $k$ leaves. Split any long path of some tour $(T,\pi)$ into two edge-intersecting sub-paths such that no chord is created (if necessary subdivide an edge of the tree into two edges) (see Figure \ref{fig:eptn-holes}-d). The set of $2k+1$ paths obtained in this way constitutes an $\enpt$ representation for $C_{2k+1}$.
\end{proof}

\begin{lemma}\label{lem:eptn-ept}
$\enpt \nsubseteq \ept$.
\end{lemma}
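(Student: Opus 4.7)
The plan is to exhibit a specific graph $G$ that belongs to $\enpt$ but not to $\ept$. My candidate is $G = K_{2,3}$, the complete bipartite graph with parts $\set{a,b}$ and $\set{x,y,z}$.

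To see that $G$ is in $\enpt$, I would give an explicit representation $\rep$. Take $T$ to consist of a host path on vertices $u_1, u_2, \ldots, u_7$ together with three extra leaves $v^x, v^y, v^z$ attached at $u_1$ and three extra leaves $w^x, w^y, w^z$ attached at $u_7$. Define $P_a$ to be the sub-path from $u_1$ to $u_4$, $P_b$ the sub-path from $u_4$ to $u_7$, and for each $\xi \in \set{x,y,z}$ let $P_\xi$ be the long path from $v^\xi$ to $w^\xi$. Then $P_a$ and $P_b$ are edge-disjoint, so $\set{a,b} \notin E(\enptgp)$. Each of $P_a, P_b$ is contained in every $P_\xi$ as a sub-path, so the six unions $P_a \cup P_\xi$ and $P_b \cup P_\xi$ are paths, yielding the six required edges of $K_{2,3}$. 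Finally, any two paths among $P_x, P_y, P_z$ share every spine edge but split at both $u_1$ and $u_7$ because their leaf extensions are distinct, so they are non-adjacent in $\enptgp$. Hence $\enptgp \cong K_{2,3}$.

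Next I would prove that $K_{2,3}$ is not an $\ept$ graph by contradiction. Assume $\reptp{T'}{\pp'}$ is such a representation. Since the only possible chords of the $4$-cycle $a$--$x$--$b$--$y$ in $K_{2,3}$ are $\set{a,b}$ and $\set{x,y}$ and neither is an edge, this cycle is a hole, so Theorem~\ref{thm:golumbicpie} forces a $4$-pie with center $c$ and leaves $l_0, l_1, l_2, l_3$ in which $P_a, P_x, P_b, P_y$ contain the sub-paths $(l_0,c,l_1), (l_1,c,l_2), (l_2,c,l_3), (l_3,c,l_0)$ respectively. The path $P_z$ must edge-intersect both $P_a$ and $P_b$ while being edge-disjoint from $P_x$ and $P_y$. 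Each of the four pie edges at $c$ is shared by exactly one of $P_x, P_y$, so $P_z$ uses none of them. Hence either $P_z$ avoids $c$ entirely and lies in a single subtree of $T' \setminus \set{c}$ rooted at one $l_i$, or $P_z$ passes through $c$ using two non-pie edges and lies otherwise in subtrees rooted at non-pie neighbours of $c$. In the first case, the extensions of $P_a$ beyond the pie can only enter subtrees rooted at $l_0$ or $l_1$, and those of $P_b$ only at $l_2$ or $l_3$, so $P_z$ cannot share an edge with both. In the second case, $P_z$ is edge-disjoint from both $P_a$ and $P_b$ entirely. Either way we obtain a contradiction.

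The main obstacle is the second step: one has to handle the flexibility that pie paths may extend beyond their pie leaves into attached subtrees. The argument nevertheless goes through because the extensions of $P_a$ and those of $P_b$ are confined to disjoint subtrees of $T' \setminus \set{c}$, which rules out a single path $P_z$ simultaneously meeting both.
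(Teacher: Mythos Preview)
Your argument is correct and complete in substance. The $\enpt$ representation of $K_{2,3}$ works exactly as you describe, and the non-membership in $\ept$ follows cleanly from the pie structure forced by Theorem~\ref{thm:golumbicpie}: every edge of the pie star is used by one of $P_x,P_y$, so $P_z$ cannot touch any of them, yet the extensions of $P_a$ and of $P_b$ beyond the pie occupy disjoint components of $T'\setminus\{c\}$. One cosmetic point: your two cases should read ``$P_z$ lies in a single component of $T'\setminus\{c\}$'' (not necessarily one rooted at an $l_i$) and ``$P_z$ contains $c$'' (allowing $c$ to be an endpoint, using only one non-pie edge). Both variants are immediately handled by the same reasoning you give, so this is only a wording issue.

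The paper proceeds differently: it exhibits the wheel $W_{5,1}$ (a $C_5$ plus a universal vertex) and invokes a result of Golumbic and Jamison stating that in an $\ept$ graph a vertex can have at most four neighbours on an induced cycle; since the hub of $W_{5,1}$ has five such neighbours, $W_{5,1}\notin\ept$. An explicit $\enpt$ representation of $W_{5,1}$ is then displayed. Your route has the advantage of being entirely self-contained within the tools already developed in this paper (only Theorem~\ref{thm:golumbicpie} is needed), whereas the paper's version is shorter but relies on an additional cited fact. Both witnesses are small and the arguments are of comparable difficulty.
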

\begin{proof}
Consider the graph $W_{5,1}$ consisting of a vertex adjacent to every other vertex of a $C_5$. In \cite{GJ85} it is shown that a vertex of an $\ept$ graph has at most $4$ neighbours in a cycle. Therefore, $W_{5,1} \notin \ept$. On the other hand Figure \ref{fig:eptn-w51} depicts an $\enpt$ representation of $W_{5,1}$.
\end{proof}

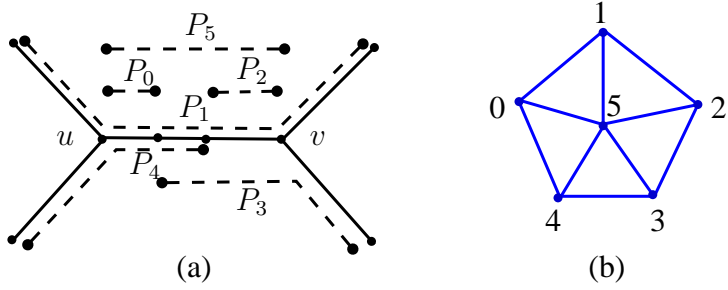
\begin{figure}[htbp]
\centering
\scalebox{1} 
{
\begin{pspicture}(0,-1.9101562)(9.585,1.9101562)
\definecolor{color137}{rgb}{0.0,0.0,0.8}
\psline[linewidth=0.04cm](0.08,1.3267188)(1.28,0.04671875)
\psline[linewidth=0.04cm](1.28,0.04671875)(0.06,-1.3132813)
\psline[linewidth=0.04cm](1.26,0.06671875)(3.66,0.04671875)
\psline[linewidth=0.04cm](3.66,0.04671875)(4.9,1.2867187)
\psline[linewidth=0.04cm](3.64,0.04671875)(4.88,-1.3532813)
\psline[linewidth=0.04,linestyle=dashed,dash=0.16cm 0.16cm,dotsize=0.07055555cm 2.0]{**-**}(0.22,-1.4132812)(1.4462337,-0.09328125)(2.68,-0.09328125)
\psline[linewidth=0.04,linestyle=dashed,dash=0.16cm 0.16cm,dotsize=0.07055555cm 2.0]{**-**}(0.18,1.4067187)(1.28,0.20671874)(3.6,0.18671875)(4.78,1.3667188)
\psline[linewidth=0.04,linestyle=dashed,dash=0.16cm 0.16cm,dotsize=0.07055555cm 2.0]{**-**}(1.98,-0.53328127)(3.84,-0.5132812)(4.64,-1.4732813)
\psline[linewidth=0.04cm,fillcolor=black,linestyle=dashed,dash=0.16cm 0.16cm,dotsize=0.07055555cm 2.0]{**-**}(1.26,0.68671876)(2.04,0.68671876)
\psline[linewidth=0.04cm,fillcolor=black,linestyle=dashed,dash=0.16cm 0.16cm,dotsize=0.07055555cm 2.0]{**-**}(2.66,0.6667187)(3.66,0.68671876)
\usefont{T1}{ptm}{m}{n}
\rput(2.4914062,0.45671874){$P_1$}
\usefont{T1}{ptm}{m}{n}
\rput(1.7514062,0.93671876){$P_0$}
\usefont{T1}{ptm}{m}{n}
\rput(3.2514062,0.95671874){$P_2$}
\usefont{T1}{ptm}{m}{n}
\rput(1.8314062,-0.30328125){$P_4$}
\usefont{T1}{ptm}{m}{n}
\rput(3.2514062,-0.78328127){$P_3$}
\usefont{T1}{ptm}{m}{n}
\rput(7.886875,1.7367188){1}
\usefont{T1}{ptm}{m}{n}
\rput(7.2609377,-1.0389062){4}
\usefont{T1}{ptm}{m}{n}
\rput(8.647656,-1.0432812){3}
\usefont{T1}{ptm}{m}{n}
\rput(6.517031,0.47234374){0}
\usefont{T1}{ptm}{m}{n}
\rput(9.458593,0.47671875){2}
\psdots[dotsize=0.12](0.08,1.3267188)
\psdots[dotsize=0.12](0.06,-1.2932812)
\psdots[dotsize=0.12](1.26,0.04671875)
\psdots[dotsize=0.12](3.64,0.04671875)
\psdots[dotsize=0.12](4.88,1.2667187)
\psdots[dotsize=0.12](4.84,-1.3132813)
\psdots[dotsize=0.12](2.0,0.06671875)
\psdots[dotsize=0.12](2.64,0.04671875)
\usefont{T1}{ptm}{m}{n}
\rput(0.77140623,0.05671875){$u$}
\usefont{T1}{ptm}{m}{n}
\rput(4.1314063,0.05671875){$v$}
\psline[linewidth=0.04,linecolor=blue](7.92,1.4867188)(6.8,0.5467188)(7.34,-0.7132813)(8.6,-0.7132813)(9.18,0.50671875)(7.94,1.4867188)
\usefont{T1}{ptm}{m}{n}
\rput(2.4851563,-1.6832813){(a)}
\usefont{T1}{ptm}{m}{n}
\rput(7.9751563,-1.6832813){(b)}
\psline[linewidth=0.04cm,linecolor=blue](7.92,1.4867188)(7.92,0.24671875)
\psline[linewidth=0.04cm,linecolor=blue](7.9,0.24671875)(6.8,0.56671876)
\psline[linewidth=0.04cm,linecolor=blue](7.94,0.24671875)(7.36,-0.7132813)
\psline[linewidth=0.04cm,linecolor=blue](7.92,0.24671875)(8.58,-0.69328123)
\psline[linewidth=0.04cm,linecolor=blue](7.92,0.24671875)(9.18,0.50671875)
\psdots[dotsize=0.12,linecolor=color137](7.92,0.22671875)
\psdots[dotsize=0.12,linecolor=color137](6.8,0.5467188)
\psdots[dotsize=0.12,linecolor=color137](7.32,-0.73328125)
\psdots[dotsize=0.12,linecolor=color137](8.58,-0.69328123)
\psdots[dotsize=0.12,linecolor=color137](9.18,0.50671875)
\psdots[dotsize=0.12,linecolor=color137](7.92,1.4667188)
\usefont{T1}{ptm}{m}{n}
\rput(8.069531,0.50109375){5}
\psline[linewidth=0.04cm,fillcolor=black,linestyle=dashed,dash=0.16cm 0.16cm,dotsize=0.07055555cm 2.0]{**-**}(1.24,1.2467188)(3.76,1.2467188)
\usefont{T1}{ptm}{m}{n}
\rput(2.5714064,1.5367187){$P_5$}
\end{pspicture} 
}
\caption {An $\enpt$ representation of $W_{5,1}$.}
\label{fig:eptn-w51}
\end{figure}

In \cite{BESZ13-TR2} we present a family of non-$\enpt$ graphs. However, these graphs are non-$\ept$ graphs. Therefore, the question whether $\ept \nsubseteq \enpt$ holds is open.

\section{Representations of $\ept$, $\enpt$ Pairs: Basic Properties}\label{sec:EPTNBasics}
In this section we develop the basic tools that we use in subsequent sections towards our goal of characterizing representations of $\enpt, \ept$ pairs. We define an equivalence relation on representations, namely two representations will be equivalent in this relation if they are representations of the same pair. We also define a partial order on representations. In this work, we focus on finding representations that are minimal with respect to this partial order. We define the contraction operation on pairs, and the union operation on representations. The contraction operation is a restricted variant of graph contraction operation that operates on both graphs of a pair. The union operation is the operation of replacing two paths by their union whenever possible.

\runningtitle{Equivalent and minimal representations:}
We say that the representations $\repn{1}$ and $\repn{2}$ are \emph{equivalent}, and denote it by $\repn{1} \approxeq \repn{2}$, if their corresponding $\ept$ and $\enpt$ graphs are isomorphic under the same isomorphism (in other words, if they constitute representations of the same pair of graphs $(G,G')$).

We write $\repn{1} \rightsquigarrow \repn{2}$ if $\repn{2}$ can be obtained from $\repn{1}$ by one of the following two operations that we term \emph{minifying operations}:

\begin{itemize}
\item{} Contraction of an edge $e$ of $T_1$ (and of all the paths in $\pp_1$ using $e$)
\item{} Removal of an initial edge (\emph{tail}) of a path in $\pp_1$.
\end{itemize}

The partial order $\gtrsim$ is the reflexive-transitive closure of the relation $\rightsquigarrow$, and $\repn{1} \lesssim \repn{2}$ is equivalent to $\repn{2} \gtrsim \repn{1}$.
$\rep$ is a \emph{minimal} representation if it is minimal in the partial order $\lesssim$ restricted to its equivalence class $[\rep]_\approxeq$ i.e., over all the representations representing the same pair as $\rep$. Throughout the work we aim at characterizing minimal representations. 

\runningtitle{$\ept$ Holes:}
\begin{lemma}\label{lem:NoBlueHole}
A hole of size at least $4$ of an $\ept$ graph does not contain blue (i.e. $\enpt$) edges.
\end{lemma}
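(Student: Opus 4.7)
The plan is to reduce the claim to a direct structural observation about pies, leveraging the characterization in Theorem \ref{thm:golumbicpie}. Since the hole in question has size $k \geq 4$, that theorem guarantees that the given representation $\rep$ contains a pie with $k$ paths whose $\ept$ graph is exactly this hole. Consequently every edge of the hole arises as a pair of consecutive paths $(P_i, P_{i+1 \bmod k})$ in the pie, and it suffices to show that each such consecutive pair is splitting.

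Let $c$ denote the center of the pie and $v_0,\ldots,v_{k-1}$ its leaves. By the definition of a pie, $P_i$ contains both leaves $v_i$ and $v_{i+1 \bmod k}$; since the star $K_{1,k}$ is induced in $T$ and $v_i, v_{i+1 \bmod k}$ are its leaves, the only way for $P_i$ to reach both of them is to traverse the two edges $\{v_i,c\}$ and $\{c, v_{i+1 \bmod k}\}$. Moreover, as a simple path $P_i$ visits $c$ at most once, and thus uses exactly these two edges of the star at $c$. The same reasoning applied to $P_{i+1 \bmod k}$ shows that it uses exactly the edges $\{v_{i+1 \bmod k},c\}$ and $\{c, v_{i+2 \bmod k}\}$.

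Therefore, in the union $P_i \cup P_{i+1 \bmod k}$ the vertex $c$ is incident to the three distinct edges
\[
\{v_i,c\},\quad \{c, v_{i+1 \bmod k}\},\quad \{c, v_{i+2 \bmod k}\},
\]
so $d_{P_i\cup P_{i+1 \bmod k}}(c) \geq 3$, which makes $c$ a junction of the union and hence an element of $\split(P_i,P_{i+1 \bmod k})$. By the definition of $\sim$, this yields $P_i \nsim P_{i+1 \bmod k}$, so the corresponding edge of the hole is not in $E(\enptgp)$, i.e.\ not blue.

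There is no real obstacle here: the only subtlety is to verify that the paths of the pie cannot use additional edges of the star around $c$ beyond those forced by the pie definition, which follows immediately because each is a simple path passing through $c$ at most once. The proof thus reduces, after invoking Theorem \ref{thm:golumbicpie}, to a one-step degree count at the center of the pie.
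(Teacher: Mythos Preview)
Your proof is correct and follows essentially the same route as the paper: invoke Theorem~\ref{thm:golumbicpie} to get a pie, then observe that consecutive paths of the pie split at the center. The paper's one-sentence proof simply asserts that any two paths of the pie satisfy $P_p \nsim P_q$ or $P_p \parallel P_q$; your version spells out the degree count at $c$ for consecutive pairs, which is exactly the content behind that assertion.
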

\begin{proof}
Consider the pie representation of some hole of size at least $4$ of an $\ept$ graph(Theorem \ref{thm:golumbicpie}). For any two paths $P_p,P_q$ of this pie, we have either $P_p \nsim P_q$ or $P_p \parallel P_q$, therefore $\set{p,q}$ is not an $\enpt$ edge.
\end{proof}
Combining with Theorem \ref{thm:golumbicpie}, we obtain the following characterization of pairs $(C_k, G')$:
\begin{itemize}
\item{$k>3$.} In this case $C_k$ is represented by a pie. Therefore $G'$ is an independent set. In other words $C_k$ consists of red edges. We term such a hole, a red hole.
\item{$k=3$ and $C_k$ consists of red edges ($G'$ is an independent set).} We term such a hole a red triangle.
\item{$k=3$ and $C_k$ contains exactly one $\enpt$ (blue) edge ($G' = P_1 \cup P_2$).} We term such a hole a $BRR$ triangle, and its representation is an edge-clique.
\item{$k=3$ and $C_k$ contains two $\enpt$ (blue) edges ($G' = P_3$).} We term such a hole a $BBR$ triangle, and its representation is an edge-clique.
\item{$k=3$ and $C_k$ consists of blue edges ($G' = C_3$).} We term such a hole a blue triangle.
\end{itemize}

\runningtitle{$\ept$ contraction:}
Let $\rep$ be a representation and $P_p,P_q \in \pp$ such that $P_p \sim P_q$. We denote by $\contractppq$ the representation that is obtained from $\rep$ by replacing the two paths $P_p,P_q$ by the path $P_p \cup P_q$, i.e. $\contractppq \defined \left<T,\pp \setminus \set{P_p,P_q} \cup \set{P_p \cup P_q}\right>$. We term this operation a \emph{union}, and note the following important property of split vertices with respect to the union operation:
\begin{observation}\label{obs:splitofunion}
For every $P_p,P_q,P_r \in \pp$ such that $P_p \sim P_q$, $\split(P_p \cup P_q, P_r) = \split(P_p,P_r) \cup \split(P_q,P_r)$.
\end{observation}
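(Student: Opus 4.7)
The plan is to prove the equality by double inclusion. The direction $\split(P_p,P_r) \cup \split(P_q,P_r) \subseteq \split(P_p \cup P_q, P_r)$ is immediate: any junction of $P_p \cup P_r$ is automatically a junction of $(P_p \cup P_q) \cup P_r = P_p \cup P_q \cup P_r$, since adding edges only increases vertex degrees; the analogous argument handles $\split(P_q,P_r)$.

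For the reverse inclusion, fix any $v \in \split(P_p \cup P_q, P_r)$ and let $E_p$, $E_q$, $E_r$ denote the sets of edges of $T$ incident to $v$ that lie in $P_p$, $P_q$, $P_r$ respectively. Each of these sets has size at most $2$, since the corresponding path has degree at most $2$ at any vertex, and moreover $|E_p \cup E_q| \leq 2$ because $P_p \sim P_q$ forces $P_p \cup P_q$ to be a path. The assumption that $v$ is a junction of $P_p \cup P_q \cup P_r$ means $|E_p \cup E_q \cup E_r| \geq 3$, and the goal is to show that either $|E_p \cup E_r| \geq 3$ or $|E_q \cup E_r| \geq 3$.

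The crux is the following claim: $P_p \sim P_q$ implies that at every vertex $v$, either $E_p \subseteq E_q$ or $E_q \subseteq E_p$. Granted the claim, $E_p \cup E_q$ coincides with $E_p$ or with $E_q$, so $E_p \cup E_q \cup E_r$ coincides with $E_p \cup E_r$ or with $E_q \cup E_r$, one of which therefore has size at least $3$; this places $v$ in $\split(P_p,P_r)$ or in $\split(P_q,P_r)$, as needed. To establish the claim, suppose neither containment holds: pick $e \in E_p \setminus E_q$ and $f \in E_q \setminus E_p$. These are distinct, and combined with $|E_p \cup E_q| \leq 2$ this forces $E_p = \{e\}$ and $E_q = \{f\}$, so $v$ is an endpoint of both $P_p$ and $P_q$ with the two paths leaving $v$ via different edges. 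Since $T$ is a tree, removing $v$ separates the subtree reached through $e$ from the one reached through $f$, so $P_p$ and $P_q$ meet only at $v$ and share no edge; this contradicts the edge-intersection required by $\sim$.

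The main obstacle I anticipate is isolating and proving the one-sided containment claim for $E_p$ and $E_q$ at each vertex; once it is in hand, the rest of the argument is routine counting of edges incident to $v$ combined with the monotonicity of degrees under taking unions of subgraphs.
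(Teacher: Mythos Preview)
Your argument is correct. The paper states this as an observation without any proof, so there is nothing to compare your approach against; your double-inclusion argument, with the key local claim that $P_p \sim P_q$ forces $E_p \subseteq E_q$ or $E_q \subseteq E_p$ at every vertex, is a clean and complete justification.
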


\begin{lemma}\label{lem:contraction-union}
Let $\rep$ be a representation for the pair $(G,G')$, and let $e=\set{p,q} \in E(G')$. Then
$\contractge$ is an $\ept$ graph. Moreover $\contractge=\eptg{\contractppq}$.
\end{lemma}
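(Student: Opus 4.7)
\noindent\emph{Proof plan.} My plan is to establish an explicit isomorphism between $\contractge$ and $\eptg{\contractppq}$, and to read off from this the fact that the latter (hence the former) is an $\ept$ graph.

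First, I will observe that the contraction $\contractppq$ is a well-defined $\ept$ representation. Since $e=\set{p,q}\in E(G')=E(\enptgp)$, by definition of $\enpt$ we have $P_p \sim P_q$, so $P_p \cup P_q$ is a path of $T$. Hence $\contractppq = \left<T,\pp \setminus \set{P_p,P_q} \cup \set{P_p \cup P_q}\right>$ is a legitimate representation, which implies immediately that $\eptg{\contractppq}$ is an $\ept$ graph.

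Next, I will define the map $\phi: V(\contractge) \to V(\eptg{\contractppq})$ by $\phi(r)=r$ for $r \in V(G)\setminus\set{p,q}$ and $\phi(p.q)=$ (the vertex of $\eptg{\contractppq}$ corresponding to) $P_p\cup P_q$. This is clearly a bijection. It remains to check that $\phi$ preserves edges in both directions. I will split into two cases.

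\emph{Case 1: edges not incident to $p.q$.} For $r,s \in V(G)\setminus\set{p,q}$, the edge $\set{r,s}$ belongs to $\contractge$ iff $\set{r,s}\in E(G)$, and (since $\pp$-membership of $P_r,P_s$ is unaffected by the union operation) $\set{r,s}\in E(\eptg{\contractppq})$ iff $P_r$ and $P_s$ share an edge of $T$, iff $\set{r,s}\in E(\eptgp)=E(G)$. So the two conditions coincide.

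\emph{Case 2: edges incident to $p.q$.} For $r \in V(G)\setminus\set{p,q}$, the edge $\set{p.q,r}$ belongs to $\contractge$ iff $\set{p,r}\in E(G)$ or $\set{q,r}\in E(G)$, iff $P_r$ shares an edge with $P_p$ or with $P_q$. Using the identity $E(P_p\cup P_q)=E(P_p)\cup E(P_q)$, this is equivalent to $P_r$ sharing an edge with $P_p\cup P_q$, which is precisely the condition $\set{\phi(p.q),\phi(r)}\in E(\eptg{\contractppq})$.

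The two cases together show $\phi$ is a graph isomorphism, so $\contractge \cong \eptg{\contractppq}$, which establishes both claims. I do not anticipate a genuine obstacle: the only subtlety is bookkeeping around the fact that the contraction operation on graphs deletes self-loops and parallel edges (which matches the fact that $\eptg{\contractppq}$ is simple), and the simple identity $E(P_p\cup P_q)=E(P_p)\cup E(P_q)$ already handles this.
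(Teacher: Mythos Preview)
Your proof is correct and follows essentially the same approach as the paper's: both arguments identify the contracted vertex with the union path $P_p\cup P_q$ and verify adjacency via the equivalence ``$P_r$ edge-intersects $P_p\cup P_q$ iff $P_r$ edge-intersects $P_p$ or $P_q$.'' Your version is slightly more explicit in that you separately justify why $P_p\cup P_q$ is a path and you also check the trivial Case~1 (edges not incident to $p.q$), which the paper leaves implicit.
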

\begin{proof}
Let $s$ be the vertex of $\contractge$ created by the contraction of $e$. We claim that $s$ corresponds to the path $P_s = P_p \cup P_q$. Consider a path $P_r \in \pp \setminus \set{P_p,P_q}$. We observe that $\set{r,s} \in E(\contractge) \iff \set{r,p} \in E(G)$ or $\set{r,q} \in E(G)$ (by definition of the contraction operation) $\iff P_r$ edge-intersects with at least one of $P_p$ and $P_q$ in $T$ (because $G=\eptg{\pp}$) $\iff$ $P_r$ edge-intersects $P_p \cup P_q = P_s$ in $T \iff \set{r, s} \in E(\eptg{\contractppq})$.
\end{proof}

We now extend the definition of the contraction operation to pairs. Based on Observation \ref{obs:splitofunion}, the contraction of an $\enpt$ edge does not necessarily preserve $\enpt$ edges. More concretely, let $P_p$,$P_q$ and $P_{q'}$ such that $P_p \sim P_q$, $P_p \sim P_{q'}$ and $P_q \nsim P_{q'}$. Then $\contract{G'}{p,q}$ is not isomorphic to $\enptg{\contractppq}$ as $ \left\{q',p.q\right\} \notin E(\enptg{\contractppq})$. Let $(G,G')$ be a pair and $e \in E(G')$. If for every edge $e' \in E(G')$ incident to $e$, the edge $e''= e \triangle e'$ (forming a triangle together with $e$ and $e'$) is not an edge of $E(G)\setminus E(G')$ (i.e. not a red edge) then $\contract{(G,G')}{e} \defined (\contractge,\contract{G'}{e})$, otherwise $\contract{(G,G')}{e}$ is undefined. Whenever $\contract{(G,G')}{e}$ is defined we say that $(G,G')$ is \emph{contractible} on $e$, and when there is no ambiguity about the pair under consideration we say that $e$ is \emph{contractible}. A pair $(G,G')$ is \emph{contractible} if it contains at least one contractible edge. Clearly, $(G,G')$ is non-contractible if and only if every edge of $G'$ is contained in at least one $BBR$ triangle.

Our goal in this work is to characterize the representations of $\enpt$ holes. More precisely we characterize representations of pairs $(G,C_n)$ where $C_n$ is a Hamiltonian cycle of $G$. For this purpose we define the following problem.

\begin{center} \fbox{\begin{minipage}{11.9cm}
\noindent  \prb

{\bf Input:} A pair $(G,C_n)$ where $C_n$ is a Hamiltonian cycle of $G$

{\bf Output:} A minimal representation $\rep$ of $(G,C_n)$ if such a representation exists, ``NO'' otherwise.

\end{minipage}}\\
\end{center}

The $\enpt$ representations of $C_3$ are characterized by Lemma \ref{lem:eptn-clique}. Therefore we assume $n > 3$, which implies that $(G,C_n)$ does not contain blue triangles. In the sequel we confine ourselves to pairs $(G,C_n)$ and representations $\rep$ satisfying the following three assumptions:
\begin{itemize}
\item{$(P1)$:} $(G,C_n)$ is not contractible.
\item{$(P2)$:} $(G,C_n)$ is $(K_4, P_4)$-free, i.e., it does not contain an induced sub-pair isomorphic to a $(K_4, P_4)$.
\item{$(P3)$:} Every red triangle of $(G,C_n)$ is a claw-clique, i.e. corresponds to a pie of $\rep$.
\end{itemize}

Note that $(P1)$ and $(P2)$ are assumptions about the pair $(G,C)$ and $(P3)$ is an assumption about the representation $\rep$. We say that $(P3)$ holds for a pair $(G,C)$ if it has a representation $\rep$ satisfying $(P3)$. It will be convenient to define the following problem.

\begin{center} \fbox{\begin{minipage}{11.9cm}
\noindent  \prbpthree

{\bf Input:} A pair $(G,C_n)$ where $C_n$ is a Hamiltonian cycle of $G$ and $n \geq 4$.

{\bf Output:} A minimal representation $\rep$ of $(G,C_n)$ that satisfies $(P3)$ if such a representation exists, ``NO '' otherwise.

\end{minipage}}\\
\end{center}

In this work we consider instances of $\prbpthree$ satisfying $(P1)$ and $(P2)$. Without loss of generality we let $V(G)=V(C_n)=\set{0,1,\ldots,n-1}$ where the numbering of the vertices is consistent with their order in $C$. All arithmetic operations on vertex numbers are done modulo $n$. We denote the corresponding set of paths in the representation as $\pp=\set{P_0, \ldots, P_{n-1}}$.

\section{Representation of $\enpt$ Holes}\label{sec:unionminimal}
In this section we characterize the minimal representations of $(G,C)$ pairs satisfying $(P1)$, $(P2)$ and $(P3)$. To achieve this goal we present an algorithm solving the $\prbpthree$ problem for instances satisfying $(P1)$ and $(P2)$.
In Section \ref{subsec:c4} we handle the case $n=4$. In Section \ref{subsec:wdt} we analyze properties of weak dual trees based on which, in Section \ref{subsec:noIntermediateVertices} we present an algorithm for the case $n>4$. $C_4$ is exceptional because all its representations satisfy assumptions $(P1-3)$, but some of our results that we prove for $n>4$ fail to hold in this case.

\subsection{The pair $(G,C_4)$}\label{subsec:c4}

\begin{lemma}\label{lem:C4}
(i) All the representations of $(G,C_4)$ satisfy assumptions $(P1-3)$, (ii) $G$ is one of the two graphs in Figure \ref{fig:eptn-c4}, and (iii) each of these two graphs has a unique minimal representation (also depicted in Figure \ref{fig:eptn-c4}).
\end{lemma}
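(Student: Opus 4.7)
My plan is to first derive (ii) from structural considerations, then verify (i) by case analysis on the two candidate graphs, and finally prove (iii) by exhibiting representations and using the clique theorems as a rigidity tool.

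For (ii): take any representation $\rep$ of a pair $(G, C_4)$. If $G$ had no chord of $C_4$, then $C_4$ would be a hole of $\eptgp$, so Theorem \ref{thm:golumbicpie} combined with $|\pp|=4$ would force $\rep$ itself to be a $4$-pie. But in any pie, every pair of consecutive paths splits at the central vertex, so $\enptgp$ would be edgeless --- contradicting $\enptgp = C_4$. Hence $G$ must contain at least one chord of $C_4$, which up to isomorphism leaves only $G = K_4 - e$ and $G = K_4$, matching the two graphs in Figure \ref{fig:eptn-c4}.

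For (i): each assumption is checked in turn on the two candidate pairs. $(P3)$ is vacuous because every triangle of $K_4-e$ and of $K_4$ contains at least two edges of $C_4$, so no triangle is red. $(P2)$ is vacuous because the only four-vertex induced sub-pair is $(G, C_4)$ itself and $C_4 \not\cong P_4$, so no $(K_4, P_4)$-sub-pair can occur. For $(P1)$, a direct check shows that every blue edge of $C_4$ lies in a $BBR$ triangle --- in $K_4 - e$ this is handled by the unique chord, and in $K_4$ both chords give such triangles --- so no edge of $C_4$ is contractible.

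For (iii): I exhibit the minimal representations drawn in Figure \ref{fig:eptn-c4} and verify them by direct computation of the $\sim$, $\nsim$ and $\parallel$ relations. For uniqueness and minimality, take any minimal representation $\rep$. By Theorem \ref{thm:golumbiccliques}, Lemma \ref{lem:eptn-clique}, and the classification of $BBR$ triangles following Lemma \ref{lem:NoBlueHole}, each maximal clique of $\eptgp$ must be realized as an edge-clique on a specific tree edge (claw-cliques are ruled out because a claw-clique would contain a splitting pair among its paths, which blue edges preclude). In the $K_4$ case the unique maximal clique is $K_4$ itself, so all four paths share a common tree edge $e = \set{u,v}$; the $\sim$ constraints on consecutive cycle-vertices and the $\nsim$ constraints on opposite ones then force, up to relabeling, two of the paths to extend at $u$ toward two distinct leaves and the other two to extend at $v$ toward two distinct leaves, yielding the double-star host tree. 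In the $K_4 - e$ case the two $BBR$ maximal cliques $\set{P_0,P_1,P_2}$ and $\set{P_0,P_2,P_3}$ share the two paths $P_0, P_2$, which pins down two adjacent tree edges $e_A, e_B$; the constraints $P_0 \nsim P_2$, $P_1 \parallel P_3$, and the four blue $\sim$ relations then force the tree shape and each path's extensions up to isomorphism. Minimality is verified by checking that every tree-edge contraction either collapses a splitting pair to a non-splitting one or degenerates a length-one path, and every tail removal breaks a blue $\sim$ relation.

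The main obstacle is this uniqueness analysis in (iii): although each local constraint (edge-clique realization, $\sim$/$\nsim$/$\parallel$ conditions at each tree vertex) is forced by the clique characterization, rigorously enumerating and eliminating all alternative tree shapes and path placements --- especially in the $K_4-e$ case, where one must rule out longer variants of the short paths $P_1, P_3$ without losing one of the required $\sim$ relations --- is the most delicate step.
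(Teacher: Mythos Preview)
Your argument for (i) and (ii) matches the paper's (the paper cites Lemma~\ref{lem:NoBlueHole} rather than Theorem~\ref{thm:golumbicpie} directly, but the content is the same). For (iii) you take a somewhat different route: you organize the analysis around the maximal cliques of $G$ and their defining tree edges, treating the $K_4$ and $K_4-e$ cases separately. The paper instead fixes the red chord $\{1,3\}$, sets $P=P_1\cap P_3$, and makes the single observation that neither $P_0$ nor $P_2$ can cross any vertex of $\split(P_1,P_3)$ (since each is non-splitting with both $P_1$ and $P_3$, it cannot follow either branch past a split vertex); this one fact handles both cases uniformly and pins down where $P_0,P_2$ must lie with almost no further casework, which is cleaner than the enumeration you anticipate as the ``delicate step''. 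One minor looseness in your version: when you exclude claw-cliques because ``blue edges preclude'' a splitting pair, note that in the $K_4$ case the maximal clique itself \emph{does} contain two red (splitting) pairs, so the exclusion needs the additional remark that the blue $C_4$ forces all four paths to the same claw-type, which collapses the claw-clique to an edge-clique anyway; alternatively, bypass this by working with the $BBR$ sub-triangles as the paper does, since those are automatically edge-cliques.
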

\begin{proof}
(i) $(G,C_4)$ is clearly $(K_4,P_4)$-free. Moreover it satisfies $(P3)$ vacuously, because it does not contain any red triangle. $G \neq C_4$, because otherwise $C_4$ would constitute a blue hole of length $4$, contradicting Lemma \ref{lem:NoBlueHole}. Without loss of generality let $\set{1,3}$ be a red edge of $G$. We observe that $\set{1,3}$ is incident to all the edges of $C_4$, therefore $(G,C_4)$ is not contractible, so it satisfies $(P1)$.

(ii) Depending on whether or not $\set{0,2} \in E(G)$, $G$ is one of the two graphs in Figure \ref{fig:eptn-c4}.

(iii) Consider a representation $\rep$ of $(G,C_4)$, and consider the path $P=P_1 \cap P_3$. Let $e_0$ (resp. $e_2$) be an edge defining the edge-clique $\set{1,3,0}$ (resp. $\set{1,3,2}$). Both of $e_0$ and $e_2$ are in $P$. Let $u \in \split(P_1,P_3)$. $u$ is an endpoint of $P$. As $P_0$ edge-intersects $P$ (at $e_0$), it can not cross $u$, because in this case it has to split from at least one of $P_1,P_3$ at $u$. The same holds for $P_2$. Therefore neither one of $P_0,P_2$ crosses a vertex of $\split(P_1,P_3)$. We consider two cases:
(a) $G$ is isomorphic to $K_4$. Then there is one edge defining the clique, i.e. without loss of generality $e_0=e_2$. If $\abs{\split(P_1,P_3)}=2$ then, none of these two vertices can be crossed by $P_0$ or $P_2$. Therefore $P_0 \subseteq P$ and $P_2 \subseteq P$, we conclude that they can not split, a contradiction. Therefore $\split(P_1,P_3)$ consists of one vertex that is not crossed by $P_0$ and $P_2$. We conclude that $P_0$ and $P_2$ cross the other endpoint of $P$ and split. The representation in Figure \ref{fig:eptn-c4} (a) is the only minimal representation satisfying these conditions.
(b) $G$ is not isomorphic to $K_4$. Therefore $e_0 \neq e_2$, and without loss of generality $e_0 \in P_0 \setminus P_2, e_2 \in P_2 \setminus P_0$. $P$ has at least one endpoint $u$ in $\split(P_1,P_3)$. Without loss of generality $e_0$ is closer to $u$ than $e_2$. Therefore $P_0$ lies between $u$ and $e_2$, and $P_2$ starts after $P_1$ and crosses $e_2$. The representation in Figure \ref{fig:eptn-c4} (b) is the only minimal representation satisfying these conditions.
\end{proof}

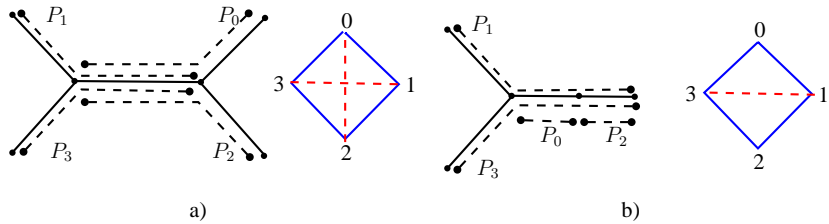
\begin{figure}[htbp]
\centering
\scalebox{0.7} 
{
\begin{pspicture}(0,-2.0676563)(15.595938,2.0676563)
\psline[linewidth=0.04cm](0.06,1.9092188)(1.26,0.62921876)
\psline[linewidth=0.04cm](1.26,0.62921876)(0.04,-0.73078126)
\psline[linewidth=0.04cm](1.24,0.64921874)(3.64,0.62921876)
\psline[linewidth=0.04cm](3.64,0.62921876)(4.88,1.8692187)
\psline[linewidth=0.04cm](3.62,0.62921876)(4.86,-0.7707813)
\psline[linewidth=0.04,linestyle=dashed,dash=0.16cm 0.16cm,dotsize=0.07055555cm 2.0]{**-**}(0.18,1.9892187)(1.3,0.74921876)(3.58,0.74921876)
\psline[linewidth=0.04,linestyle=dashed,dash=0.16cm 0.16cm,dotsize=0.07055555cm 2.0]{**-**}(0.22,-0.75078124)(1.3701298,0.48921874)(3.5,0.44921875)
\psline[linewidth=0.04,linestyle=dashed,dash=0.16cm 0.16cm,dotsize=0.07055555cm 2.0]{**-**}(1.36,0.96921873)(3.5780392,0.96921873)(4.6,1.9892187)
\psline[linewidth=0.04,linestyle=dashed,dash=0.16cm 0.16cm,dotsize=0.07055555cm 2.0]{**-**}(1.36,0.24921875)(3.5904,0.24921875)(4.62,-0.85078126)
\psline[linewidth=0.04,linecolor=blue](6.34,1.5892187)(5.36,0.6092188)(6.38,-0.45078126)(7.4,0.58921874)(6.38,1.5692188)(6.4,1.5492188)
\psline[linewidth=0.04cm,linecolor=red,linestyle=dashed,dash=0.16cm 0.16cm](6.38,1.4292188)(6.38,-0.5107812)
\psline[linewidth=0.04cm,linecolor=red,linestyle=dashed,dash=0.16cm 0.16cm](5.34,0.62921876)(7.38,0.58921874)
\usefont{T1}{ptm}{m}{n}
\rput(4.1714063,1.8792187){$P_0$}
\usefont{T1}{ptm}{m}{n}
\rput(0.9114063,1.8792187){$P_1$}
\usefont{T1}{ptm}{m}{n}
\rput(4.0714064,-0.7007812){$P_2$}
\usefont{T1}{ptm}{m}{n}
\rput(0.99140626,-0.7007812){$P_3$}
\usefont{T1}{ptm}{m}{n}
\rput(6.4170313,1.7992188){0}
\usefont{T1}{ptm}{m}{n}
\rput(7.626875,0.57921875){1}
\usefont{T1}{ptm}{m}{n}
\rput(6.398594,-0.7007812){2}
\usefont{T1}{ptm}{m}{n}
\rput(5.1276565,0.5992187){3}
\psline[linewidth=0.04cm](8.34,1.6292187)(9.54,0.34921876)
\psline[linewidth=0.04cm](9.54,0.34921876)(8.32,-1.0107813)
\psline[linewidth=0.04cm](9.52,0.36921874)(11.92,0.34921876)
\psline[linewidth=0.04,linestyle=dashed,dash=0.16cm 0.16cm,dotsize=0.07055555cm 2.0]{**-**}(8.46,1.7092187)(9.58,0.46921876)(11.88,0.48921874)
\psline[linewidth=0.04,linestyle=dashed,dash=0.16cm 0.16cm,dotsize=0.07055555cm 2.0]{**-**}(8.46,-1.0907812)(9.640235,0.18921874)(11.98,0.16921875)
\psline[linewidth=0.04,linecolor=blue](14.22,1.3892188)(13.2,0.42921874)(14.22,-0.63078123)(15.24,0.40921876)(14.22,1.3892188)(14.24,1.3692187)
\psline[linewidth=0.04cm,linecolor=red,linestyle=dashed,dash=0.16cm 0.16cm](13.3,0.42921874)(15.3,0.38921875)
\usefont{T1}{ptm}{m}{n}
\rput(10.291407,-0.40078124){$P_0$}
\usefont{T1}{ptm}{m}{n}
\rput(8.991406,1.7192187){$P_1$}
\usefont{T1}{ptm}{m}{n}
\rput(11.531406,-0.40078124){$P_2$}
\usefont{T1}{ptm}{m}{n}
\rput(9.091406,-1.0207813){$P_3$}
\usefont{T1}{ptm}{m}{n}
\rput(14.257031,1.6192187){0}
\usefont{T1}{ptm}{m}{n}
\rput(15.466875,0.39921874){1}
\usefont{T1}{ptm}{m}{n}
\rput(14.238594,-0.88078123){2}
\usefont{T1}{ptm}{m}{n}
\rput(12.947657,0.41921875){3}
\psline[linewidth=0.04cm,linestyle=dashed,dash=0.16cm 0.16cm,dotsize=0.07055555cm 2.0]{**-**}(9.64,-0.09078125)(10.78,-0.13078125)
\psline[linewidth=0.04cm,linestyle=dashed,dash=0.16cm 0.16cm,dotsize=0.07055555cm 2.0]{*-*}(10.92,-0.13078125)(11.82,-0.13078125)
\psdots[dotsize=0.12](0.06,1.9092188)
\psdots[dotsize=0.12](0.06,-0.7107813)
\psdots[dotsize=0.12](4.84,-0.7707813)
\psdots[dotsize=0.12](4.86,1.8492187)
\psdots[dotsize=0.12](3.64,0.62921876)
\psdots[dotsize=0.12](1.24,0.64921874)
\psdots[dotsize=0.12](8.32,-1.0107813)
\psdots[dotsize=0.12](8.32,1.6292187)
\psdots[dotsize=0.12](9.54,0.36921874)
\psdots[dotsize=0.12](11.88,0.34921876)
\psdots[dotsize=0.12](10.82,0.36921874)
\usefont{T1}{ptm}{m}{n}
\rput(3.5860937,-1.8407812){a)}
\usefont{T1}{ptm}{m}{n}
\rput(11.802969,-1.8407812){b)}
\end{pspicture} 
}
\caption{The two minimal $\enpt$ representations of $C_4$.}
\label{fig:eptn-c4}
\end{figure}

\subsection{Weak Dual Trees}\label{subsec:wdt}
\newcommand{\algwdt}{\textsc{FindWeakDualTree}}

We extend the definition of the weak dual tree of Hamiltonian outerplanar graphs to any Hamiltonian graph as follows. Given a pair $(G,C)$ where $C$ is a Hamiltonian cycle of $G$, a weak dual tree of $(G,C)$ is the weak dual tree $\wdtg$ of an arbitrary Hamiltonian maximal outerplanar subgraph $\opg$ of $G$. $\opg$ can be built by starting from $C$ and adding to it arbitrarily chosen chords from $G$ as long as such chords exist and the resulting graph is planar. We note that under the assumptions of this paper, i.e. when $(P1-3)$ hold, $G$ will be shown to be outerplanar, and therefore there is actually one weak dual tree. 

By definition of a dual graph, vertices of $\wdtg$ correspond to faces of $\opg$. By maximality, the faces of $\opg$ correspond to holes of $G$. The degree of a vertex of $\wdtg$ is the number of red edges in the corresponding face of $\opg$. To emphasize the difference, for an outerplanar graph $G$ we will refer to \emph{the} weak dual tree of $G$, whereas for a (not necessarily outerplanar) graph $G$ we will refer to \emph{a} weak dual tree of $G$.

We proceed with observations on $\wdtg$:
\begin{itemize}
\item Edges of $\wdtg$ correspond to red edges of $\opg$ (by definition of a weak dual graph, and observing that the edges of the unbounded face are exactly the blue edges).
\item The degree of a vertex of $\wdtg$ is the number of red edges in the corresponding face of $\opg$, therefore the leaves (resp. intermediate vertices, junctions) of $\wdtg$ correspond to $BBR$ triangles (resp. $BRR$ triangles, red holes) of $(G,C)$ (recalling Lemma \ref{lem:NoBlueHole}).
\item $\abs{V(G)}=\abs{V(C)}=\abs{E(C)}= 2 \ell + i$ where $\ell$ is the number of leaves of $\wdtg$ and $i$ is the number of its intermediate vertices.
\end{itemize}

\begin{lemma}\label{lem:BBRTrianglesDontOverlap}
Let $n>4$ and $(G,C_n)$ be a pair satisfying $(P1-3)$. Then every edge of $C_n$ is in exactly one $BBR$ triangle.
\end{lemma}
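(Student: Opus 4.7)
The plan is by contradiction: suppose some edge $\{i,i+1\}$ of $C_n$ lies in two $BBR$ triangles. Since the third vertex of a $BBR$ triangle containing $\{i,i+1\}$ must have two blue edges to $\{i,i+1\}$, it has to be $i-1$ or $i+2$, so both triangles must be $\{i-1,i,i+1\}$ and $\{i,i+1,i+2\}$, and both chords $\{i-1,i+1\}$ and $\{i,i+2\}$ are red. Now inspect the induced sub-pair on $\{i-1,i,i+1,i+2\}$: its $C_n$-restriction is a $P_4$. If $\{i-1,i+2\}\in E(G)$ the $G$-restriction is $K_4$, so the sub-pair is $(K_4,P_4)$, directly violating $(P2)$. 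Hence assume $\{i-1,i+2\}\notin E(G)$ and fix a representation $\rep$ satisfying $(P3)$.

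By Lemma~\ref{lem:eptn-clique} each $BBR$ triangle is an edge-clique, say at edges $e_1$ (for $\{i-1,i,i+1\}$) and $e_2$ (for $\{i,i+1,i+2\}$) of $T$. If $e_1=e_2$, all four paths share that edge, so $P_{i-1}$ and $P_{i+2}$ edge-intersect, giving $\{i-1,i+2\}\in E(G)$, a contradiction. Thus $e_1\neq e_2$; write $e_1=\{u_1,v_1\}$ and $e_2=\{v_2,u_2\}$ with $v_1,v_2$ lying on the (unique) $T$-path from $e_1$ to $e_2$. Since $P_i,P_{i+1}$ belong to both edge-cliques, both contain $e_1$, $e_2$, and the entire $v_1$--$v_2$ sub-path. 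A case analysis at $u_1$ using $P_{i-1}\nsim P_{i+1}$ together with $P_i\sim P_{i-1}$, $P_i\sim P_{i+1}$ shows that $P_{i-1},P_{i+1}$ must exit $u_1$ in two \emph{distinct} directions $d_{i-1}\neq d_{i+1}$, and that $u_1$ must be an endpoint of $P_i$ (otherwise any extension of $P_i$ past $u_1$ would split it from $P_{i-1}$ or $P_{i+1}$). Symmetrically, $u_2$ is an endpoint of $P_{i+1}$, and $P_i,P_{i+2}$ exit $u_2$ in distinct directions. Consequently $P_i$ and $P_{i+1}$ ``cross'' through the middle, while $P_{i-1}$ (resp.\ $P_{i+2}$) is confined to the $u_1$-side (resp.\ $u_2$-side) of $T$, touching the middle only on a sub-path $v_1$--$b$ (resp.\ $v_2$--$b'$) that stops strictly before reaching $v_2$ (resp.\ $v_1$), because otherwise $P_{i-1}$ and $P_{i+2}$ would share a middle edge and $\{i-1,i+2\}$ would be an edge.

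The contradiction is now extracted from the rest of $C_n$, which is where $n>4$ is used. For any $j\notin\{i-1,i,i+1,i+2\}$ I claim $P_j$ cannot share any edge with $P_i\cap P_{i+1}$ (i.e., $e_1$, $e_2$, or the middle $v_1$--$v_2$ sub-path): if it did, it would share that edge with \emph{both} $P_i$ and $P_{i+1}$, making both $\{j,i\}$ and $\{j,i+1\}$ edges of $G$; but then for $j$ one of the cycle-neighbours $i-2$ or $i+3$, applying $(P2)$ to the $4$-consecutive sub-pair on $\{j, i-1, i, i+1\}$ or $\{i,i+1,i+2,j\}$ (whose blue-restriction is a $P_4$) would yield $(K_4,P_4)$; and for $j$ farther from $\{i,i+1\}$, iterating this observation along $C_n$ forces the same obstruction one step at a time. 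Therefore every such $P_j$ lives in $T$ outside the middle-plus-$e_1$-plus-$e_2$ region and each connects to its cyclic neighbours only through the $u_1$-side or the $u_2$-side branches. But by $(P1)$ the blue chain $P_{i+2},P_{i+3},\dots,P_{i-2},P_{i-1}$ (of length $n-4\geq 1$) must be realized by consecutive edge-sharings, so it must travel in $T$ from the $u_2$-side region (containing $P_{i+2}$) to the $u_1$-side region (containing $P_{i-1}$); since $T$ is a tree, these two regions are separated by the $v_1$--$v_2$ middle sub-path, so some path in the chain is forced to use a middle edge — contradicting the conclusion of the preceding claim. The main obstacle in executing the plan is making that last claim rigorous: one must carefully verify, using $(P2)$ on the correct $4$-vertex sub-pairs (and, if the forced red triangles arise during the iteration, using $(P3)$ to ensure the pie structure agrees with the geometry derived at $u_1$ and $u_2$), that no $P_j$ with $j\notin\{i-1,i,i+1,i+2\}$ can borrow any edge of $P_i\cap P_{i+1}$.
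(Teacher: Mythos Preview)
Your setup through the point $e_1\neq e_2$ and the local geometry of $P_{i-1},P_i,P_{i+1},P_{i+2}$ tracks the paper's argument. The divergence is in the endgame, and there your proposed ``iteration along $C_n$'' does not go through. For $j=i+3$ you do get a $(K_4,P_4)$ on $\{i,i+1,i+2,i+3\}$, but already for $j=i+4$ the analogous four consecutive vertices $\{i+1,i+2,i+3,i+4\}$ need not span a $K_4$: the hypothesis that $P_{i+4}$ meets $P_i\cap P_{i+1}$ yields the edges $\{i,i+4\}$ and $\{i+1,i+4\}$, but says nothing about $\{i+1,i+3\}$ or $\{i+2,i+4\}$. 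There is no evident inductive step, and you yourself flag this as the main obstacle.

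The paper closes the argument in one stroke, and its idea drops straight into your framework while making the elaborate analysis at $u_1,u_2$ unnecessary. Do not try to exclude every $P_j$ from the entire middle; look only at the specific clique edge $e_2$. The vertices $\{i+3,\dots,i-2\}$ induce a connected subgraph of $\eptgp$, so $\bigcup_{j}P_j$ is a subtree $T'$ of $T$ that edge-intersects both $P_{i-1}$ and $P_{i+2}$; since these two are edge-disjoint and lie on opposite sides, some $P_j$ with $j\notin\{i-1,i,i+1,i+2\}$ must contain $e_2$. Then $P_i,P_{i+1},P_{i+2},P_j$ all share $e_2$, so $\{i,i+1,i+2,j\}$ is an edge-clique. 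If $j=i+3$ this is a $(K_4,P_4)$, contradicting $(P2)$; otherwise $j$ is cycle-adjacent to none of $i,i+1,i+2$, so $\{i,i+2,j\}$ is a \emph{red} triangle realised as an edge-clique, contradicting $(P3)$ directly. No iteration, and no need to pin down the endpoints of $P_i,P_{i+1}$.
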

\begin{proof}
Let $\rep$ be a representation of $(G,C_n)$ satisfying $(P3)$. As $(G,C_n)$ is not contractible, every edge of $C_n$ is in at least one $BBR$ triangle. Assume, by contradiction and without loss of generality, that the blue edge $\set{1,2}$ is part of the two possible $BBR$ triangles $\set{0,1,2}$ and $\set{1,2,3}$. $\set{0,3}$ is not an edge of $C_n$, because $n > 4$. Moreover, it is not an edge of $G$, because otherwise the sub-pair induced by $\set{0,1,2,3}$ is isomorphic to a $(K_4,P_4)$. Let $e_0$ (resp. $e_3$) be an edge of $T$ defining the edge-clique $\set{0,1,2}$ (resp. $\set{1,2,3}$) such that $e_3$ is closest to $e_0$. Clearly, $e_0 \neq e_3$, because otherwise we get a $(K_4,P_4)$. The vertices $\set{4,\ldots,n-1}$ constitute a connected component of $G$, therefore the union of the corresponding paths is a subtree $T'$ of $T$. $T'$ edge-intersects both $P_0$ and $P_3$, therefore there is at least one path $P_j \notin \set{P_0,P_1,P_2,P_3}$ that contains $e_3$. We conclude that $\set{1,2,3,j}$ is an edge-clique. If $j=4$ then it induces a pair isomorphic to $(K_4,P_4)$, otherwise $\set{1,3,j}$ is a red edge-clique. Both cases contradict our assumptions.
\end{proof}

\begin{lemma}\label{lem:contraction-wdt}
Let $(G,C)$ be a pair satisfying $(P2),(P3)$ and let $\wdtg$ be a weak dual tree of $(G,C)$. Then (i) there is a bijection between the contractible edges of $(G,C)$ and the intermediate vertices of $\wdtg$, (ii) the tree obtained from $\wdtg$ by smoothing out the intermediate vertex corresponding to a contractible edge $e$ is a weak dual tree of $\contractgce$.
\end{lemma}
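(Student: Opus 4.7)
The plan is to exploit the correspondence between blue edges of $C$ and bounded faces of $\opg$. Since $\opg$ is maximal outerplanar with $C$ as its outer boundary, every bounded face is a triangle; assuming $n \geq 4$, each such triangle is of type $BBR$ (two blue, one red), $BRR$ (one blue, two red) or $RRR$ (three red). The degree of a vertex of $\wdtg$ equals the number of red edges bordering the corresponding face, so intermediate (degree-$2$) vertices of $\wdtg$ correspond exactly to the $BRR$ faces. I would define a map $\phi$ sending each blue edge $e$ of $C$ to the unique bounded face of $\opg$ with $e$ on its boundary; $\phi$ is $2$-to-$1$ on $BBR$ faces and $1$-to-$1$ on $BRR$ faces.

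For (i), I claim the restriction of $\phi$ to the contractible edges is a bijection onto the intermediate vertices of $\wdtg$. If $e$ is contractible then $\phi(e)$ cannot be a $BBR$ face, since that face would itself be a $BBR$ triangle of $G$ containing $e$, contradicting contractibility; hence $\phi(e)$ is $BRR$, and injectivity follows from the $1$-to-$1$ property. For surjectivity, given a $BRR$ face $T_v = \set{i,i+1,y}$ with $y \notin \set{i-1,i+2}$, I must show its blue edge $e_v = \set{i,i+1}$ is not in any $BBR$ triangle of $G$, i.e.\ that neither $\set{i-1,i+1}$ nor $\set{i,i+2}$ lies in $E(G)$.

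Assume for contradiction that $\set{i-1,i+1} \in E(G)$. If $y = i-2$, then the sub-pair induced by $\set{i-2,i-1,i,i+1}$ carries all six edges (using $\set{i-2,i},\set{i-2,i+1} \in \opg$) with the blue ones forming a $P_4$ on $(i-2,i-1,i,i+1)$, so it is isomorphic to $(K_4,P_4)$, contradicting $(P2)$. Otherwise $y$ lies in the ``long arc'' $\set{i+2,\dots,i-2}$. Here I would invoke the representation $\rep$: by Theorem \ref{thm:golumbiccliques}, the $BBR$ triangle $\set{i-1,i,i+1}$ is an edge-clique sharing a common edge $e_0 \in E(T)$ among $P_{i-1},P_i,P_{i+1}$ (a claw-clique is ruled out because its three paths pairwise split at the claw center, contradicting $P_{i-1}\sim P_i$ and $P_i \sim P_{i+1}$). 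If also $\set{i-1,y}\in E(G)$, then $\set{i-1,i+1,y}$ is a red triangle, which by $(P3)$ is a pie at some claw center $c$; the non-splitting constraints $P_i \sim P_{i-1}$ and $P_i \sim P_{i+1}$ would force $P_i$ to agree at $c$ with both $P_{i-1}$ and $P_{i+1}$, impossible since these two paths use distinct pairs of claw edges at $c$. If $\set{i-1,y}\notin E(G)$, then $P_y$ is edge-disjoint from $P_{i-1}$ yet splits with both $P_i$ and $P_{i+1}$, while all three of $P_{i-1},P_i,P_{i+1}$ contain $e_0$; the contradiction would be obtained by tracking where $P_y$ can share edges with $P_i$ and $P_{i+1}$ on opposite sides of $e_0$ outside the range of $P_{i-1}$, which over-constrains the way $P_i$ and $P_{i+1}$ extend past $P_{i-1}$ while remaining non-splitting with it.

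For (ii), let $e=\set{p,q}$ with $T_v=\set{p,q,y}$. In the planar embedding of $\opg$, contracting $e$ identifies the two red edges $\set{p,y}$ and $\set{q,y}$ into a single chord $\set{p.q,y}$, collapsing the triangular face $T_v$ and leaving every other face and chord unaffected. The resulting graph $\contract{\opg}{e}$ is outerplanar with $\contract{C}{e}$ as its outer face, and maximality persists because no edge of $E(\contractge) \setminus E(\contract{\opg}{e})$ can be added without reintroducing an outerplanarity obstruction already present in $\opg$; thus $\contract{\opg}{e}$ is an $\op$ of $\contractgce$. Its weak dual is obtained from $\wdtg$ by deleting $v$ and merging the two edges formerly incident to $v$ (these correspond to the two merged red chords of $T_v$) into a single edge, which is exactly the smoothing-out of $v$. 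The main obstacle is the surjectivity subcase of (i) where $y$ is far from $i$ and $\set{i-1,y}\notin E(G)$, since here no small induced sub-pair directly violates $(P2)$ or $(P3)$, and the contradiction must come from a careful case analysis of the paths' relative positions around $e_0$ using the edge-clique structure of $\set{P_{i-1},P_i,P_{i+1}}$ together with the splitting requirements imposed on $P_y$.
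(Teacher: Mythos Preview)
Your overall framework for (i) matches the paper's: map each blue edge to its unique bounded face of $\opg$, note that contractible edges cannot land on $BBR$ faces (giving the forward direction and injectivity), and then argue surjectivity by showing that the blue edge of a $BRR$ face cannot be in a $BBR$ triangle of $G$. Part (ii) is also essentially the same.

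The genuine gap is in your surjectivity argument. You split into cases on whether $\set{i-1,y}\in E(G)$ and explicitly leave the second case (``the main obstacle'') unresolved, hoping for a local analysis of how $P_y$ sits relative to $P_{i-1},P_i,P_{i+1}$ around the shared edge $e_0$. This local approach does not naturally close: when $\set{i-1,y}\notin E(G)$ there is no small induced sub-pair that directly violates $(P2)$ or $(P3)$, and the splitting constraints on $P_y$ alone do not pin down enough structure to force a contradiction. The paper avoids this case split entirely with a global argument you are missing. Let $E'$ be the set of edges common to $P_{i-1},P_i,P_{i+1}$ (the edge-clique of the $BBR$ triangle $\set{i-1,i,i+1}$). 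First one shows that \emph{no} path $P_k$ with $k\notin\set{i-1,i,i+1}$ meets $E'$: for $k=i-2$ this would give a $(K_4,P_4)$ on $\set{i-2,i-1,i,i+1}$, violating $(P2)$; for any other $k$ the triple $\set{k,i-1,i+1}$ would be a red edge-clique, violating $(P3)$. Now remove the edges of $E'$ from $T$; since $P_y$ misses $E'$, it lies entirely in one component $T_y$, and one checks that $P_{i-1}$ does not edge-intersect $T_y$. But the vertices $y+1,y+2,\dots,i-1$ (going around $C$ the long way) form a connected subgraph of $G$, so the union of their paths is a subtree $T'$ of $T$ that edge-intersects both $P_y$ (hence $T_y$) and $P_{i-1}$. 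Therefore $T'$ must contain an edge of $E'$, meaning some $P_\ell$ with $\ell\notin\set{i-1,i,i+1}$ meets $E'$, contradicting the first step. This connectivity trick is the key idea you need; once you have it, the ad hoc distinction on $\set{i-1,y}$ becomes unnecessary.
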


\begin{proof}
(i) We define the bijection $f$ as follows: Let $\wdtg$ be the weak dual tree corresponding to some $\opg$, and let $e$ be a contractible edge of $(G,C)$. Then $e$ is not part of any $BBR$ triangle. As every blue edge must be in some triangle, $e$ is in a non-empty set of $BRR$ triangles. Exactly one of these triangles is in $\opg$, and this triangle corresponds to an intermediate vertex of $\wdtg$ that we designate as $f(e)$. $f$ is one-to-one because every intermediate vertex corresponds to one $BRR$ triangle of $\opg$, and every $BRR$ has one blue edge. We now show that $f$ is onto. Assume by contradiction that $f$ is not onto. Then, without loss of generality there is a $BRR$ triangle $\set{1,2,j}$ ($j \notin \set{0,1,2,3}$) of $\opg$ where $e=\set{1,2}$ is not contractible. Then either $\set{0,2}$ or $\set{1,3}$ is an edge of $E(G) \setminus E(C)$. Let, without loss of generality $\set{0,2}$ be an edge of $E(G) \setminus E(C)$. Then $\set{0,1,2}$ is an edge-clique. Let $E'$ be the set of edges of (the path of) $T$ defining this edge-clique. We claim that $\forall k \notin \set{0,1,2}, P_k \cap E' = \emptyset$. Indeed, if $k=n-1$ and $P_k$ contains an edge of $E'$, then $\set{n-1,0,1,2}$ induces a $(K_4,P_4)$, and if $k \neq n-1$ then $\set{k,0,2}$ induces a red edge-clique. In both cases we reach a contradiction. Consider the subtrees of $T$ separated by $E'$. As $P_j \cap E'=\emptyset$ it is completely contained in one of these subtrees, say $T_j$. $P_1$ and $P_2$ edge-intersect $P_j$, therefore they edge-intersect $T_j$. However $P_0$ and $T_j$ are edge-disjoint as this would either contradict the definition of $E'$ or $P_0$ would split from $P_1$. On the other hand, the vertices $\set{j+1,j+2 \ldots, 0}$ constitute a connected component of $G$, therefore the union of the paths $\set{P_{j+1},P_{j+2} \ldots, P_0}$ is a subtree $T'$ of $T$. $T'$ edge-intersects both $P_0$ and $P_j$, therefore $T'$ edge-intersects $E'$. In other words there is at least one path $P_l \in \set{P_{j+1},P_{j+2} \ldots, P_0}$ that edge-intersects $E'$, a contradiction.

(ii) Let $e=\set{i,i+1}$ and $\set{i,i+1,j}$ the $BRR$ triangle of $\opg$ (that corresponds to $f(e)$). After the contraction of $e$, this triangle reduces to a red edge. The same holds for $\contract{\opg}{e}$ that contains all the faces of $\opg$ except the $BRR$ triangle that disappeared. The corresponding weak dual tree is $\wdtg$ with $f(e)$ smoothed out.
\end{proof}

We note that if $n=4$ Lemma \ref{lem:BBRTrianglesDontOverlap} does not hold. However the following corollary of lemmata \ref{lem:BBRTrianglesDontOverlap} and \ref{lem:contraction-wdt} holds for every $n$.

\begin{coro}\label{coro:NoIntermediate}
If $(G,C)$ is a pair satisfying $(P1-3)$ with $C$ isomorphic to $C_n$, then: (i) $\wdtg$ does not have intermediate vertices, (ii) $n$ is even and $\wdtg$ has $n/2$ leaves, and (iii) $\wdtg$ is a path if and only if $n=4$.
\end{coro}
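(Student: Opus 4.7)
The plan is to derive all three parts of the corollary as direct consequences of the structural results on $\wdtg$ already established in the excerpt, together with the non-contractibility assumption $(P1)$. In particular, I would not need to inspect any concrete representation; everything reduces to counting leaves and intermediate vertices of $\wdtg$.

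For part (i), I would apply Lemma \ref{lem:contraction-wdt}(i), which (whenever $(P2)$ and $(P3)$ hold) supplies a bijection between the contractible edges of $(G,C)$ and the intermediate vertices of $\wdtg$. Since assumption $(P1)$ states that $(G,C)$ is not contractible, i.e.\ that no edge of $G'$ is contractible, the domain of this bijection is empty, so the codomain must be empty as well. Note that Lemma \ref{lem:contraction-wdt}(i) has no restriction on $n$, so this argument works uniformly; in particular, Lemma \ref{lem:C4}(i) guarantees that $(P1\text{--}3)$ hold automatically when $n=4$, so the case $n=4$ requires no separate treatment.

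For part (ii), I would simply invoke the third bullet in the list of observations on $\wdtg$: $\abs{V(G)} = 2\ell + i$, where $\ell$ and $i$ denote the number of leaves and intermediate vertices of $\wdtg$ respectively. Plugging in $i = 0$ from part (i) yields $n = 2\ell$, so $n$ is even and $\ell = n/2$.

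For part (iii), the forward direction is immediate: when $n=4$ parts (i) and (ii) force $\wdtg$ to have $2$ leaves and no intermediate vertices, so $\wdtg$ is just $K_2$, which is a path. For the converse, suppose $\wdtg$ is a path on at least two vertices; then it has exactly two leaves, so combining with (ii) gives $n/2 = 2$, i.e.\ $n=4$. A single-vertex $\wdtg$ would force $n=3$, which is excluded by the standing assumption $n \geq 4$ after Lemma \ref{lem:NoBlueHole}. I do not expect any serious obstacle here; the only delicate point is to remember that Lemma \ref{lem:BBRTrianglesDontOverlap} required $n>4$, but the proof above relies on Lemma \ref{lem:contraction-wdt} rather than Lemma \ref{lem:BBRTrianglesDontOverlap}, so the $n=4$ case is absorbed without incident.
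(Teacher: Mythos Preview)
Your proposal is correct and follows essentially the approach the paper intends: the corollary is stated without proof, merely as a consequence of Lemmas~\ref{lem:BBRTrianglesDontOverlap} and~\ref{lem:contraction-wdt} together with the listed observations on $\wdtg$. Your argument is in fact slightly cleaner than what the paper hints at, since you derive part~(i) uniformly from Lemma~\ref{lem:contraction-wdt} alone (valid for all $n$) rather than invoking Lemma~\ref{lem:BBRTrianglesDontOverlap} for $n>4$ and handling $n=4$ separately; this neatly sidesteps the $n>4$ restriction the paper itself flags.
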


\subsection{The Minimal Representation}\label{subsec:noIntermediateVertices}
In this section we present an algorithm solving $\prbpthree$ for $n \geq 5$, provided that assumptions $(P1)-(P2)$ hold. The representation returned by the algorithm is a planar tour. We show that it is the unique minimal representation of $(G,C)$ satisfying $(P3)$.

\newcommand{\algone}{\textsc{BuildPlanarTour}}

\begin{lemma}\label{lem:P123ImpliesOuterPlanarAndPlanarTour}
If $(G,C)$ is a hamiltonian pair with $n=\abs{V(G)} > 4$ for which properties $(P1-3)$ hold then
$G$ is outerplanar and the unique minimal representation of $(G,C)$ satisfying $(P3)$ is a planar tour of the weak dual tree of $G$.
\end{lemma}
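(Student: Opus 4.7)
The plan is to use the structural constraints imposed by $(P1-3)$ together with the weak-dual-tree analysis of Corollary \ref{coro:NoIntermediate} to build a canonical planar tour on $\wdtg$ and show it is the unique minimal representation of $(G,C)$ satisfying $(P3)$.

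First, I will apply Lemma \ref{lem:BBRTrianglesDontOverlap}: each blue edge of $C$ lies in exactly one $BBR$ triangle, so the $BBR$ triangles partition the blue edges into $n/2$ pairs of consecutive edges. Relabeling, the $BBR$ triangles become $T_j = \set{2j, 2j+1, 2j+2}$ for $j=0,\ldots,n/2-1$, with red chords $\set{2j, 2j+2}$. By Theorem \ref{thm:golumbiccliques}, each $T_j$ is either an edge-clique or a claw-clique; containing two non-splitting pairs $\set{P_{2j},P_{2j+1}}$ and $\set{P_{2j+1},P_{2j+2}}$ rules out the claw-clique option, so $T_j$ is an edge-clique defined by some edge $e_j$ of the host tree $T$. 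I will then argue that, in a minimal representation, $P_{2j+1}$ coincides with $e_j$: the only $G$-neighbors of the odd vertex $2j+1$ are $2j$ and $2j+2$ (any additional cycle-nearby neighbor of $2j+1$ creates a second $BBR$ triangle through an edge of $C$, violating Lemma \ref{lem:BBRTrianglesDontOverlap}; any cycle-distant red neighbor would produce chords that, using planarity of $\wdtg$ and Corollary \ref{coro:NoIntermediate}, contradict $(P3)$ or the absence of intermediate vertices), so any tail of $P_{2j+1}$ beyond $e_j$ is unnecessary and can be removed.

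Second, I will analyze the even-indexed long paths $P_{2j}$, each of which contains $e_{j-1}$ and $e_j$. The red edges of $G$ among even-indexed vertices correspond to red triangles and red holes of length at least $4$; by $(P3)$ and Theorem \ref{thm:golumbicpie}, each such red cycle is represented by a pie, which supplies a junction of $T$. Assembling the pendant edges $e_j$ with the pie-junctions, and recalling from Corollary \ref{coro:NoIntermediate} that $\wdtg$ has no intermediate vertices, produces a tree isomorphic to $\wdtg$ on which the paths $P_{2j+1}$ and $P_{2j}$ become respectively the short and long paths of a planar tour. Direct verification then shows that the $\ept$ graph of this tour equals $G$ (so $G$ is the outerplanar graph dual to $\wdtg$, hence outerplanar) and its $\enpt$ graph equals $C$.

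Finally, for uniqueness, given an arbitrary minimal representation $\repprime$ of $(G,C)$ satisfying $(P3)$, the same structural forces (edge-cliques at $BBR$ triangles, pies at red cycles, no intermediate vertices) force the host tree of $\repprime$ to be isomorphic to $\wdtg$, and minimality (no further tail removal or edge contraction possible) forces each path to its planar-tour form. The main obstacle will be the second step, specifically proving that the pies of distinct red cycles glue together into exactly $\wdtg$ and not some larger tree: this requires carefully invoking $(P3)$ to exclude both edges of $T$ not traversed by any path and spurious $\ept$ edges that would contradict the given edge set of $G$.
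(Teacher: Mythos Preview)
Your overall strategy---pin down the $BBR$ triangles via Lemma~\ref{lem:BBRTrianglesDontOverlap}, realize each as an edge-clique on some $e_j$, identify odd-indexed paths as short and even-indexed as long, and then read off a planar tour on $\wdtg$---is sound and is in fact a different route from the paper's.  The paper does not argue path-by-path; it runs an induction on the number of junctions of $\wdtg$, and the inductive step rests on two structural claims: (a) two adjacent junctions of $\wdtg$ are represented by pies with \emph{distinct} centers, and (b) for any junction $u$, the removal of the pie-edges at $f(u)$ splits $T$ into subtrees that each contain exactly one segment of $C\setminus H_u$.  Claim (b) is the engine: it simultaneously yields outerplanarity (segments in different subtrees cannot edge-intersect) and allows the induction to recurse on smaller pairs $(\bar G_i,\bar C_i)$.

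The gap in your proposal is precisely the point you flag as ``the main obstacle,'' but your proposed handling does not close it.  Concretely, to conclude $P_{2j+1}=e_j$ you need that $2j+1$ has no $G$-neighbour outside $\{2j,2j+2\}$.  Your ``cycle-nearby'' case is fine (a second $BBR$ triangle on a blue edge contradicts Lemma~\ref{lem:BBRTrianglesDontOverlap}), and a far neighbour $k$ whose path $P_k$ contains $e_j$ is also fine (then $\{2j,2j+2,k\}$ is a red edge-clique, contradicting $(P3)$).  But if $P_k$ avoids $e_j$, nothing you cite forces a contradiction: $P_{2j+1}$ may legitimately extend beyond $e_j$ along $P_{2j}$ (or $P_{2j+2}$), staying non-splitting with both, and meet $P_k$ further out.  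Neither ``planarity of $\wdtg$'' (a tree is always planar) nor Corollary~\ref{coro:NoIntermediate} excludes this, because $\wdtg$ is built from $\opg$, and the crossing chord $\{2j+1,k\}$ is simply absent from $\opg$---it lives only in $G$.  The same difficulty recurs in your second step when you try to glue the pies: you must show that an even path $P_{2j}$ does not wander into the ``wrong'' subtree and create an $\ept$ edge that is not a chord of $\opg$.  This is exactly what the paper's Claim~\ref{claim:SubtreesOfTheRepresentationsAreIndependent} establishes, and its proof is the real work of the lemma (it uses $(P3)$ repeatedly to show that any path straying across a pie center would create a red edge-clique).  Your proposal needs an analogue of that claim; without it both the outerplanarity of $G$ and the identification $T\cong\wdtg$ remain unproven.
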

\begin{proof}
The proof is by induction on the smallest number $h$ of junctions of a weak dual tree of $(G,C)$. Let $\wdtg$ be a weak dual tree of $(G,C)$ with $h$ junctions, and $\opg$ the corresponding maximal outerplanar graph. Index arithmetic is modulo $n$ through the proof, and $\rep$ is a minimal representation of $(G,C)$ satisfying $(P3)$. We first recall that since $(G,C)$ satisfies $(P1)$, by Corollary \ref{coro:NoIntermediate}, $\wdtg$ contains only junctions and leaves. In the sequel we show that $T$ is isomorphic to $\wdtg$ and $\pp$ is a planar tour of $T$. We do this by combining planar tours of subtrees into a planar tour a tree. Two basic tools that we use in the construction are the following two claims that state, roughly speaking, a) that two adjacent holes of $\opg$ are represented by two pies with distinct centers, and b) that the representations associated with disjoint subtrees of $\wdtg$, reside in disjoint subtrees of $T$.

For a junction $x$ of $\wdtg$, let $H_x$ be the set of vertices of the hole corresponding to $x$ in $\opg$. By Property $(P3)$, $H_x$ is represented by a pie. We denote by $f(x)$ be the center of the pie in $\rep$ representing $H_x$.

\begin{claim}\label{claim:AdjacentHolesDistinctCenters}
If $u,v$ are two adjacent junctions of $\wdtg$ then $f(u) \neq f(v)$.
\end{claim}
\begin{proof}
Let $\set{i,j}$ be the edge common to the holes $H_u$ and $H_v$. In other words $\set{i,j}$ is the dual of the edge $\set{u,v}$ of $\wdtg$. Let $k \neq i$ and $k' \neq i$ be the vertices adjacent to $j$ in these two holes. Assume for a contradiction that $f(u)=f(v)$ (see Figure \ref{fig:AdjacentHoles} for an illustration).
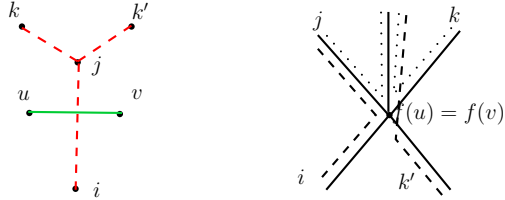
\begin{figure}[htbp]
\centering
\scalebox{0.7} 
{
\begin{pspicture}(0,-1.8929688)(9.342813,1.8929688)
\definecolor{color299}{rgb}{0.0,0.8,0.2}
\psdots[dotsize=0.12](0.5209375,-0.26546875)
\psdots[dotsize=0.12](2.2409375,-0.28546876)
\psdots[dotsize=0.12](0.3809375,1.3745313)
\psdots[dotsize=0.12](2.4609375,1.3745313)
\psdots[dotsize=0.12](1.4409375,0.69453126)
\psdots[dotsize=0.12](1.4009376,-1.7054688)
\psline[linewidth=0.04cm,linecolor=red,linestyle=dashed,dash=0.16cm 0.16cm](0.3809375,1.3545313)(1.4409375,0.71453124)
\psline[linewidth=0.04cm,linecolor=red,linestyle=dashed,dash=0.16cm 0.16cm](1.4409375,0.71453124)(2.4609375,1.3945312)
\psline[linewidth=0.04cm,linecolor=red,linestyle=dashed,dash=0.16cm 0.16cm](1.4609375,0.6745312)(1.4009376,-1.7254688)
\psline[linewidth=0.04cm,linecolor=color299](0.5009375,-0.24546875)(2.2809374,-0.26546875)
\usefont{T1}{ptm}{m}{n}
\rput(1.8023437,-1.7154688){$i$}
\usefont{T1}{ptm}{m}{n}
\rput(1.8123437,0.62453127){$j$}
\usefont{T1}{ptm}{m}{n}
\rput(0.23234375,1.7045312){$k$}
\usefont{T1}{ptm}{m}{n}
\rput(2.6323438,1.6845312){$k'$}
\usefont{T1}{ptm}{m}{n}
\rput(0.41234374,0.06453125){$u$}
\usefont{T1}{ptm}{m}{n}
\rput(2.5723438,0.08453125){$v$}
\psdots[dotsize=0.12](7.3609376,-0.30546874)
\psline[linewidth=0.04cm](6.0009375,1.2145313)(7.3809376,-0.34546876)
\psline[linewidth=0.04cm](7.3409376,-0.24546875)(7.3409376,1.6545312)
\psline[linewidth=0.04cm](7.3609376,-0.32546875)(8.700937,1.2945312)
\psline[linewidth=0.04cm](7.3609376,-0.30546874)(8.520938,-1.7254688)
\psline[linewidth=0.04cm](7.3609376,-0.30546874)(6.1609373,-1.7254688)
\psline[linewidth=0.04,linestyle=dotted,dotsep=0.16cm](6.1609373,1.2145313)(7.2209377,0.03453125)(7.2009373,1.6545312)
\psline[linewidth=0.04,linestyle=dotted,dotsep=0.16cm](7.5409374,1.6345313)(7.4409375,0.09453125)(8.540937,1.3545313)
\psline[linewidth=0.04,linestyle=dashed,dash=0.16cm 0.16cm](7.6809373,1.5945313)(7.4809375,-0.7654688)(8.360937,-1.8254688)
\psline[linewidth=0.04,linestyle=dashed,dash=0.16cm 0.16cm](6.0009375,0.95453125)(7.1009374,-0.30546874)(6.0809374,-1.4854687)
\usefont{T1}{ptm}{m}{n}
\rput(5.682344,-1.4754688){$i$}
\usefont{T1}{ptm}{m}{n}
\rput(7.6923437,-1.5554688){$k'$}
\usefont{T1}{ptm}{m}{n}
\rput(8.592343,1.6045313){$k$}
\usefont{T1}{ptm}{m}{n}
\rput(6.012344,1.4845313){$j$}
\usefont{T1}{ptm}{m}{n}
\rput(8.482344,-0.27546874){$f(u)=f(v)$}
\end{pspicture} 
}
\caption{Adjacent holes of $\opg$ are mapped to different centers.}\label{fig:AdjacentHoles}
\end{figure}
$P_i,P_j,P_k$ are consecutive in one pie and $P_i,P_j,P_{k'}$ are consecutive in the other. Then $P_k$ and $P_{k'}$ edge-intersect $P_j$ on the same edge  (incident to $f(u)$), thus forming an edge-clique of $G$. We will show that this is a red edge-clique of $G$, contradicting $(P3)$. Indeed, $\set{j,k}$ and $\set{j,k'}$ are red edges of $\opg$. If $\set{k,k'}$ is a blue edge then $\set{j,k,k'}$ constitutes a $BRR$ triangle of $\opg$ that corresponds to an intermediate vertex of $\wdtg$, contradicting Corollary \ref{coro:NoIntermediate}.
\end{proof}

\begin{claim}\label{claim:SubtreesOfTheRepresentationsAreIndependent}
Let $u$ be a junction of degree $d$ of a weak dual tree of $(G,C)$. Let $S_1,S_2,\ldots,S_d$ be the connected components of $C \setminus H_u$, and let $\pp_i$ be the set of paths representing the vertices of $S_i$ in a minimal representation. Then $H_u$ is represented by a pie with edges $e_1,e_2,\ldots,e_d$ whose removal together with $f(u)$ divides $T$ into subtrees $T_1, T_2, \ldots, T_d$, such that:
\begin{enumerate}[i)]
\item \label{itm:AllAreInTiEi} $\cup \pp_i \subseteq T_i + e_i$ for every $i \in [d]$,
\item \label{itm:NonSingleton} $\cup \pp_i \subseteq T_i$ whenever $S_i$ is not a singleton, and
\item \label{itm:Singleton} $E(T_i) = \emptyset$ whenever $S_i$ is a singleton.
\end{enumerate}
\end{claim}
\begin{proof}
\ref{itm:AllAreInTiEi},\ref{itm:NonSingleton}) The removal of $f(u)$ from $T$ (together with its incident edges) defines at least $d$ subtrees $T_1, T_2, \ldots T_d$ of $T$ where $e_i$ has one endpoint in $T_i$ for $i \in [d]$. We consider two vertices $i,j$ consecutive on the hole $H_u$. Without loss of generality $P_i$ contains $e_0$ and $e_1$, $P_j$ contains $e_1$ and $e_2$. Consider the \emph{segment} (i.e. connected component) $S=\set{i+1,i+2,\ldots,j-1}$ of $G \setminus H_u$. We will conclude the proof of \ref{itm:AllAreInTiEi},\ref{itm:NonSingleton}) by showing that $\cup \pp_S \subseteq T_1$ where $\pp_S$ is the set of paths in $\pp$ that represent the vertices of $S$.

If $S$ contains at least two vertices, then the hole adjacent to $H_u$ is a red hole $H_v$. By Claim \ref{claim:AdjacentHolesDistinctCenters}, $f(u) \neq f(v)$. Since $f(u),f(v) \in \split(P_i,P_j)$ and $\abs{\split(P_i,P_j)} \leq 2$ this implies that $\split(P_i,P_j)=\set{f(u),f(v)}$.

Let $P=p_T(f(u),f(v))$ and let $T_u,T_v,T'_1,T'_2,\ldots$ be the trees of the forest obtained by the removal of the edges of $P$ from $T$, where $V(T_u) \cap V(P)=f(u)$, $V(T_v) \cap V(P)=f(v)$ and $V(T'_\ell) \cap V(P)$ is an intermediate vertex of $P$. We observe that if a path $P_k$ edge-intersects some subtree $T'_\ell$ in at least one edge and also edge-intersects $P$, then $\set{i,j,k}$ is a red edge-clique, contradicting property $(P3)$. Therefore, every path edge-intersecting $T'_\ell$ is contained in $T'_\ell$ implying that set of vertices represented by such paths are disconnected from the rest of $G$, contradicting the connectedness of $G$. We conclude that the tree $T'_\ell$ contains no paths and since $\rep$ is minimal, $T'_\ell$ consists of a single vertex, namely an intermediate vertex of $P$. Summarizing, we have $T = T_u \cup T_v \cup P$. Finally, we note that $T_1 = P \cup T_v$. In the sequel we show that $\pp_S \subseteq T_v$ and $P$ consists of the edge $e_1$.

Let $S' = S \setminus \set {i+1,j-1}$. $H_v$ contains at least one vertex $k \in S'$, and $P_k$ is part of the pie centered at $f(v)$. Therefore, $P_k \subseteq T_v$, implying that $\pp_{S'} \cap T_v \neq \emptyset$. If $P_{k'}$ crosses $v$ for some $k' \in S'$ then $\set{i,j,k'}$ constitutes a red edge clique, contradicting property $(P3)$. Therefore, $\cup \pp_{S'} \subseteq T_v$. We now show that $P_{i+1} \cup P_{j-1} \subseteq T_v$. Since $i+1$ (resp. $j-1$) is adjacent to $i+2 \in S'$ (resp. $j-2 \in S'$), both of $P_{i+1}$ and $P_{j-2}$ edge-intersect $\cup \pp_{S'}$ implying that the both edge-intersect $T_v$. We now consider the $BBR$ triangle $j,j+1,j+2$. We have $P_j \nsim P_{j+2}$, therefore $\emptyset \neq \split(P_j,P_{j+2}) \subseteq V(P_{j+2}) \subseteq V(T_v)$. Let $x$ be vertex of $\split(P_j,P_{j+2})$ closest to $v$ (possibly $x=v$). Assume, by way of contradiction that $P_{j+1}$ crosses $v$. If $P_{j+1}$ does not cross $x$ then $P_{j+1} \parallel P_{j+2}$, otherwise $P_{j+1} \nsim P_{j+2}$ or $P_{j+1} \nsim P_j$. Both cases contradict the fact that $j,j+1,j+2$ are consecutive in $C$. Therefore, $P_{j+1}$ does not cross $v$, i.e. $P_{j+1} \subseteq T_v$. Similarly, $P_{i-1} \subseteq T_v$. We conclude that $\cup \pp_S \subseteq T_v$. Since the only paths edge-intersecting $P$ are $P_i$ and $P_j$, and by the minimality of the representation, $P$ consists of only one edge, namely $e_1$, concluding the proof of \ref{itm:AllAreInTiEi}) and \ref{itm:NonSingleton}) for this case. Otherwise, $S$ is a singleton. Then $S=\set{i+1}$ and $i,i+1,j$ are consecutive in $C$. Therefore, $P_{i+1} \sim P_i$ and $P_{i+1} \sim P_j$. Then, $P_{i+1}$ is contained in $T_1+e_1$

\ref{itm:Singleton}) If $S$ is a singleton, the only paths edge-intersecting $T_1+e_1$ are $P_i,P_{i+1},P_j$, since all the other paths are in their respective subtrees, each disjoint from $T_1$. Together with the minimality of the representation, this implies that $P_{i+1}$ consists of the single edge $e_1$ and $E(T_1)=\emptyset$.
\end{proof}

We now proceed with the proof the lemma. If $h=0$, $\wdtg$ contains at most two vertices implying that $n \leq 4$. Therefore, $h \geq 1$.

Consult Figure \ref{fig:inductive-proof} (a) and (b) for the following discussion. Let $u$ be a junction of $\wdtg$, $H_u=\set{h_0,h_1,\ldots,h_{d-1}}$, $S_i$ be the segment of $C \setminus H_u$ between $h_i$ and $h_{i+1}$, and let $e_1,\ldots,e_d$, $T_1,\ldots,T_d$ as in Claim \ref{claim:SubtreesOfTheRepresentationsAreIndependent}.

If $h=1$ all the segments $S_i, i \in[0,d-1]$ are singletons. Then $T_i=\emptyset$ and the only vertex $h_i+1$ of $S_i$ is represented by a path consisting of $e_i$, for every $i \in [0,d-1]$. Therefore, $T$ is a star isomorphic to $\wdtg$ and $\pp$ is a planar tour of it.

If $h > 1$ for a singleton segment $S_i$ we have $T_i=\emptyset$ and the only vertex $h_i+1$ of $S_i$ is represented by a path consisting of $e_i$. For a segment $S_i$ consisting of at least two vertices we proceed as follows: the edge $h_i,h_{i+1}$ separates $H_u$ from another red hole $H_v$ where $f(u) \neq f(v)$ by Claim \ref{claim:AdjacentHolesDistinctCenters}. This implies that $\split(P_{h_i},P_{h_{i+1}})=\set{f(u),f(v)}$, and without loss of generality $f(v) \in V(T_i)$.

For the following discussion see Figure \ref{fig:inductive-proof} (c) and (d). Let $\bar{S}_i = S_i \cup \set{h_i,h_{i+1}}$ and let $(\bar{G}_i,\bar{C}_i)$ be the pair obtained from the pair $(G[\bar{S}_i],C[\bar{S}_i])$ by adding to it a new vertex $v_i$ and two edges $\set{v_i,h_i},\set{v_i,h_{i+1}}$. Let also $\bar{T}_i = T_i + e_i$ and $\bar{\pp}_i= \pp_{S_i} \cup \set{P_{h_i} \cap \bar{T}_i, P_{h_{i+1}} \cap \bar{T}_i,P_{v_i}}$ where $P_x$ is the path consisting of the edge $e_i$. Then $\reptp{\bar{T}_i}{\bar{\pp}_i}$ is a representation of  $(\bar{G}_i,\bar{C}_i)$, since the paths $P_{h_i}$ and $P_{h_{i+1}}$ split in a vertex $f(v)$ of $T_i$ and the other parts are completely contained in $T_i$, by Claim \ref{claim:SubtreesOfTheRepresentationsAreIndependent}.

We note that a minifying operation applied to an edge of $T_i$ to get a representation equivalent to $\reptp{\bar{T}_i}{\bar{\pp}_i}$ can be applied to $\rep$ to get an equivalent representation to $\rep$ contradicting the minimality of $\rep$. Moreover, any minifying operation on $e_i$ will cause $P_{v_i}$ to have an empty edge-intersection with $P_{h_i}$ or with $P_{h_{i+1}}$. Therefore, $\reptp{\bar{T}_i}{\bar{\pp}_i}$ is minimal. By the inductive hypothesis a) $\bar{G}_i$ is outerplanar, b) $\bar{\pp}_i$ is a planar tour of $\bar{T}_i$, and, c) $\bar{T_i}$ is isomorphic to the weak dual tree of $\bar{G}_i$. Consider $\wdtg$ as rooted at $u$, and let $\wdt_i$ be the subtree of $u$ containing $v$. We note that the weak dual tree of $\bar{G}_i$ is isomorphic to $\wdt_i$.

It remains to observe that $T$ is the union of all trees $\bar{T}_i$, i.e. isomorphic to $\wdtg$, and that $\pp$ is a planar tour of it. $G$ is outerplanar, since each $G_i$ is outerplanar, and if $G$ is not outerplanar, then there is an edge between a vertex of $S_i$ and a vertex of $S_j$ for $i,j \in [d]$ and $i \neq j$. However, this is a contradiction to the fact that, by Claim \ref{claim:SubtreesOfTheRepresentationsAreIndependent}, $\cup S_i \subseteq \bar{T}_i$, $\cup S_j \subseteq \bar{T}_j$ and $\bar{T}_i, \bar{T}_j$ are edge-disjoint.
\begin{figure}[htbp]
\centering
\include{figure/inductive-proof}
\caption{(a), (b), (c), (d) An induction step of the proof of Lemma \ref{lem:P123ImpliesOuterPlanarAndPlanarTour} illustrated for $d = 5$. To keep the figure simple, most of the paths are omitted and the segments having more than one vertex, i.e. $S_1, S_3, S_5$, are depicted by arcs. (e) The unique minimal representation of $(G,C)$ satisfying $(P3)$ obtained by combining the subtrees and the paths of the segments with a representation of a hole $H_u$.}
\label{fig:inductive-proof}
\end{figure}

\end{proof}

\begin{lemma}\label{lem:OuterPlanarImpliesP13}
If $(G,C)$ is a hamiltonian pair on $n>4$ vertices and $G$ is outerplanar such that every edge of the outer face of $G$ is contained in a $BBR$ triangle, then properties $(P1-3)$ hold for $(G,C)$.
\end{lemma}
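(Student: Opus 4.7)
The plan is to verify $(P1)$, $(P2)$, $(P3)$ in turn; the first two are immediate from the hypotheses, while $(P3)$ requires constructing a representation.

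For $(P1)$, by definition a blue edge is contractible iff it lies in no $BBR$ triangle, so the hypothesis directly gives $(P1)$. For $(P2)$, outerplanar graphs contain no $K_4$ subgraph, hence $(G,C)$ cannot contain an induced $(K_4,P_4)$ sub-pair.

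For $(P3)$, set $T:=\wdtg$; this is unambiguous since $G$ outerplanar forces $\opg=G$. I would first establish that $T$ has no intermediate vertices. The key structural fact is that in a Hamiltonian outerplanar embedding, every triangle of $G$ is a face: the sides of a triangle on $C$-vertices $\{i,j,k\}$ partition the interior of $C$, and by the non-crossing property of chords no chord can lie strictly inside the triangle, so no vertex or edge of $G$ is inside it. Therefore the unique bounded face adjacent to a blue edge $e$ is a triangle iff $e$ lies in some triangle, which by hypothesis must be a $BBR$. A degree-$2$ vertex of $T$ would correspond to a $BRR$ triangular face whose single blue edge would then not border any $BBR$ triangle, a contradiction. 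Writing $\ell$ for the number of leaves of $T$, we obtain $n=2\ell$.

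Now define $\rep$ to be the planar tour $(T,\pi)$ whose cyclic leaf ordering is induced by a DFS traversal of $T$ compatible with the outerplanar embedding of $G$; label the $2\ell=n$ short and long paths as $P_0,\ldots,P_{n-1}$ so that short and long paths alternate and consecutive indices are $\sim$. Verifying $\eptgp=G$, $\enptgp=C$, and $(P3)$ proceeds by induction on the number of junctions of $T$, essentially inverting the decomposition from the proof of Lemma~\ref{lem:P123ImpliesOuterPlanarAndPlanarTour}. At each junction $u$ of degree $d$, the $d$ long paths traversing $u$ form a pie realising the red hole of $G$ dual to $u$; in particular, degree-$3$ junctions give claw-cliques for every red triangle of $G$, which establishes $(P3)$. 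At each leaf, the short path together with the two incident long paths yields the $BBR$ triangle bordering the corresponding pair of blue edges of $C$. The main obstacle is the $\ept$ bookkeeping: one must verify both that every chord of $G$ arises from a pair of paths meeting at the appropriate junction and that no spurious edge intersections appear. This relies on the fact that at every junction $u$ only the ``border'' paths of the face dual to $u$ traverse $u$, while the remaining paths lie entirely within one of the subtrees of $T$ hanging off $u$, so the sub-representations combine independently.
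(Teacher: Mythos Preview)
Your proof is correct and follows essentially the same strategy as the paper: dispose of $(P1)$ and $(P2)$ directly, then establish $(P3)$ by exhibiting the planar tour of $\wdtg$ as a representation and verifying it by induction on the number of junctions.

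There are two minor differences worth noting. First, you explicitly prove that $\wdtg$ has no intermediate vertices (via the observation that every triangle of a Hamiltonian outerplanar graph is a face), whereas the paper's proof uses this fact tacitly when it asserts that the $h=1$ case makes $\wdtg$ a star; your argument here is a genuine addition of rigor. Second, for the inductive step you decompose at a junction into $d$ pieces in the style of Lemma~\ref{lem:P123ImpliesOuterPlanarAndPlanarTour}, while the paper instead picks a single chord $\{i,j\}$ separating two red holes, splits $(G,C)$ into two smaller pairs $(G',C')$, $(G'',C'')$ by adding one new vertex to close each cycle, applies the inductive hypothesis to each, and then glues the two resulting planar tours along the shared edge by deleting the short paths of the two auxiliary vertices and identifying the common endpoint of $P_i$ and $P_j$. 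The paper's two-way split is marginally simpler to check (one only needs to verify that the two sub-pairs again satisfy the hypothesis, which follows because the chord $\{i,j\}$ forces the $BBR$ triangles at $i$ and $j$ to lie on the correct side), while your $d$-way split mirrors the uniqueness argument more closely. Both work.
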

\begin{proof}
$(G,C_n)$ satisfies $(P1)$ since every edge of $C_n$ is in a $BBR$ triangle.
Since $G$ is outerplanar it does not contain a $K_4$. Therefore, $(G,C_n)$ satisfies $(P2)$. To prove that $(P3)$ holds, we show by induction on the number $h$ of junctions of the weak dual tree $\wdtg$ of $G$ that a planar tour of $\wdtg$ is a representation of $(G,C_n)$ satisfying $(P3)$:

If $h=1$ then $\wdtg$ is a star, and $G=\opg$ is a hole surrounded by $BBR$ triangles. Then, a planar tour of $\wdtg$ is a representation of $(G,C_n)$ satisfying $(P3)$.

If $h>1$ we pick an chord $e=\set{i,j}$ of $C$ separating two red holes of $G$ and construct two pairs $(G',C'), (G'',C'')$ in a way similar to the proof of Lemma \ref{lem:P123ImpliesOuterPlanarAndPlanarTour}. $V'=\set{i,i+1,\ldots,j}$ and $V''=\set{j,j+1,\ldots,i}$. $(G',V')$ (resp. $(G'',V'')$) consists of $(G[V'],C[V'])$ (resp. $(G[V''],C[V''])$) and an additional vertex with two adjacent edges closing the cycle. By the inductive assumption, a planar tour of $\wdtgprime$ (resp. $\wdtgdprime$) is a representation of $(G',C')$ (resp. $(G'',C'')$). Removing from these planar tours the short paths corresponding to the vertices not in $V(G)$ and gluing together the rest by identifying the common endpoints of the paths $P_i, P_j$ we get a planar tour of $\wdtg$ that represents $(G,C)$.
\end{proof}

We get the following theorem as a corollary of lemmata \ref{lem:BBRTrianglesDontOverlap}, \ref{lem:P123ImpliesOuterPlanarAndPlanarTour} and \ref{lem:OuterPlanarImpliesP13}.

\begin{theorem}\label{theo:P1-3EquivalentToOuterPlanar}
The following statements are equivalent whenever $n>4$:
\begin{enumerate}[i)]
\item {} $(G,C_n)$ satisfies assumptions $(P1-3)$.
\item {} $(G,C_n)$ has a unique minimal representation satisfying $(P3)$ which is a planar tour of a weak dual tree of $G$.
\item {} $G$ is Hamiltonian outerplanar and every face adjacent to the unbounded face $F$ is a triangle having two edges in common with $F$, (i.e. a BBR triangle).
\end{enumerate}
\end{theorem}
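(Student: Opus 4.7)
The plan is to establish a cycle of implications among (i), (ii), and (iii), leveraging the three lemmas that precede the theorem as essentially plug-and-play building blocks. Since the theorem is stated as a corollary, the proof should be short and modular.

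First I would handle (i) $\Rightarrow$ (ii): this is precisely Lemma \ref{lem:P123ImpliesOuterPlanarAndPlanarTour}, which under assumptions $(P1\text{--}3)$ produces the unique minimal representation as a planar tour of a weak dual tree, so nothing new is needed here. Next I would handle (iii) $\Rightarrow$ (i): this is exactly Lemma \ref{lem:OuterPlanarImpliesP13}, since its hypothesis requires outerplanarity of $G$ together with each outer-face edge lying in a BBR triangle, which is a restatement of (iii).

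The remaining implication to close the cycle is (ii) $\Rightarrow$ (iii). Here I would take the planar embedding of the tree $T$ witnessing that the minimal representation is a planar tour and read off the promised outerplanar embedding of $G$: the $n$ paths of the tour, placed around $T$ in the planar order determined by $\pi$, yield a drawing of $G$ in which the cycle $C_n$ bounds the outer face. Each leaf $v_i$ of $T$ is incident to one short path and two long paths forming an edge-clique along the edge of $T$ incident to $v_i$; this edge-clique induces a triangle in $G$ with two blue edges (the short path is non-splitting with each of its two long-path neighbors) and one red edge (the two long paths split at $v_i$), i.e.\ a BBR triangle. In the planar embedding this triangle is a bounded face that shares two edges with $F=C_n$, and these BBR triangles cover every edge of $C_n$ exactly once, so every face adjacent to $F$ is of the required form.

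The hard part, and essentially the only non-mechanical step, will be justifying the claim that a planar tour actually produces a valid planar embedding of $G$ in which every chord of $C_n$ (i.e.\ every red edge) is drawn inside the outer face — in particular that no two chords cross. This should follow from the structure of a planar tour: the dual tree $T$ already gives a planar dual-style embedding, and each chord of $C_n$ (corresponding to a pair of long paths meeting at an internal edge of $T$) is naturally placed in the planar region adjacent to that edge of $T$. A short alternative to (ii) $\Rightarrow$ (iii), if one prefers to sidestep the embedding argument, is to derive (iii) directly from (i) via Lemma \ref{lem:P123ImpliesOuterPlanarAndPlanarTour} (for outerplanarity) and Lemma \ref{lem:BBRTrianglesDontOverlap} (for the BBR-triangle-per-edge condition), and then observe that (ii) $\Rightarrow$ (i) holds because a planar tour directly witnesses $(P3)$, while $(P1)$ and $(P2)$ follow from the BBR structure exhibited along each leaf of $T$ together with the absence of $K_4$ in an outerplanar graph.
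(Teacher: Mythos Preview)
Your proposal is correct and matches the paper's approach: the paper's proof is literally the one sentence ``We get the following theorem as a corollary of lemmata \ref{lem:BBRTrianglesDontOverlap}, \ref{lem:P123ImpliesOuterPlanarAndPlanarTour} and \ref{lem:OuterPlanarImpliesP13}'', and you have correctly identified how these pieces assemble---Lemma~\ref{lem:P123ImpliesOuterPlanarAndPlanarTour} for (i)$\Rightarrow$(ii) and the outerplanarity half of (iii), Lemma~\ref{lem:BBRTrianglesDontOverlap} for the BBR-triangle-per-edge half of (iii), and Lemma~\ref{lem:OuterPlanarImpliesP13} for (iii)$\Rightarrow$(i). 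One minor slip: in your direct (ii)$\Rightarrow$(iii) argument you say the two long paths $L_{i-1},L_i$ ``split at $v_i$'', but $v_i$ is a leaf of $T$; they split at the nearest junction of $T$ along their common subpath, though the conclusion $L_{i-1}\nsim L_i$ is unaffected.
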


Algorithm $\algone$ calculates a planar tour of a weak dual tree $\wdtg$. Therefore,

\begin{theorem}\label{thm:BuildPlanarTourCorrect}
Instances of $\prbpthree$ satisfying properties $(P1),(P2)$ can be solved in polynomial time.
\end{theorem}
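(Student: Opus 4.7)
The plan is to reduce the problem to the structural characterization in Theorem \ref{theo:P1-3EquivalentToOuterPlanar} and then verify that the algorithm \algone{} is nothing more than a sequence of polynomial-time subroutines that test the conditions of that theorem and, when they are met, build the unique minimal representation.

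First, I would dispose of the case $n=4$ by direct appeal to Lemma \ref{lem:C4}: it suffices to check in constant time which of the two graphs of Figure \ref{fig:eptn-c4} the input pair equals, and output the corresponding minimal representation (also given explicitly in Figure \ref{fig:eptn-c4}), or ``NO'' otherwise. This handles the exceptional case in which the characterization of Theorem \ref{theo:P1-3EquivalentToOuterPlanar} does not apply.

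For $n>4$, Theorem \ref{theo:P1-3EquivalentToOuterPlanar} asserts that a representation of $(G,C_n)$ satisfying $(P3)$ exists if and only if $G$ is outerplanar with Hamiltonian cycle $C_n$ as the unbounded face and every face adjacent to the unbounded face is a BBR triangle; moreover, if such a representation exists, it is unique and equals a planar tour of the weak dual tree $\wdtg$. Hence the algorithm proceeds in four steps, each polynomial: (a) test outerplanarity of $G$ using a standard linear-time algorithm; (b) compute a planar embedding of $G$ with the Hamiltonian cycle $C_n$ as the outer face (under assumption $(P2)$, $G$ has no $K_4$, so biconnectedness of $G$ yields such an embedding, unique up to reflection); (c) scan every bounded face incident to an edge of $C_n$ and verify that it is a triangle sharing two edges with $C_n$; (d) construct the weak dual tree $\wdtg$ (it has $O(n)$ vertices and edges), read off from the planar embedding the cyclic order $\pi$ of the leaves of $\wdtg$, and output the planar tour $(\wdtg,\pi)$. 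If any of (a)--(c) fails the algorithm outputs ``NO''.

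Correctness is immediate from Theorem \ref{theo:P1-3EquivalentToOuterPlanar}: success of (a)--(c) is equivalent to the existence of a minimal representation satisfying $(P3)$, and when it exists the planar tour built in (d) is exactly that unique minimal representation, as shown in the proof of Lemma \ref{lem:OuterPlanarImpliesP13}. The only step that requires care is (b), where one must certify that the planar embedding with $C_n$ as outer face can be found in polynomial time; this is where I expect the main subtlety to lie, but it follows from standard results on Hamiltonian outerplanar graphs (for a biconnected outerplanar graph, the unique Hamiltonian cycle bounds the outer face, so one can obtain the embedding directly from the outerplanarity test). The total running time is dominated by the outerplanarity test and the face enumeration, both of which are polynomial, completing the proof.
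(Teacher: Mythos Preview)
Your proposal is correct and follows essentially the same approach as the paper, which gives only a one-line proof pointing to Algorithm~\algone{} and implicitly relying on Theorem~\ref{theo:P1-3EquivalentToOuterPlanar}. Two minor remarks: your step~(c) is redundant, since under $(P1)$ every edge of $C$ lies in a $BBR$ triangle $\{i{-}1,i,i{+}1\}$ or $\{i,i{+}1,i{+}2\}$, and in an outerplanar embedding such a triangle is automatically the (unique) bounded face incident to $\{i,i{+}1\}$ (the two candidate chords $\{i{-}1,i{+}1\}$ and $\{i,i{+}2\}$ cross, so at most one exists); and your explicit handling of $n=4$ via Lemma~\ref{lem:C4} usefully covers a case that the algorithm's precondition $\abs{V(G)}\geq 5$ leaves aside.
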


\alglanguage{pseudocode}

\begin{algorithm}
\footnotesize
\caption{$\algone(G,C)$}
\label{alg:algone}
\begin{algorithmic}[1]
\Require {$\abs{V(G)} \geq 5$}
\Require {$(G,C)$ satisfies assumptions $(P1),(P2)$}
\Ensure {$\repbar$ is the unique minimal representation of $(G,C)$ satisfying $(P3)$}
\If {$G$ is not outerplanar}
\State \Return ``NO''
\EndIf
\State $\bar{T} \leftarrow \wdtg$.
\Comment Corresponding to $\opg$
\Statex \textbf{Build the planar tour:}
\State Let $\set{v_0, v_1, \ldots, v_{k-1}}$ be the leaves of $\bar{T}$ ordered as they are
\State encountered in a DFS traversal of $\bar{T}$ corresponding to the planar embedding
\State suggested by $\opg$.
\State Let $L_i = p_{\bar{T}}(v_i,v_{(i+1) \mod k})$
\State Let $S_i$ be the path of length $1$ starting at $v_i$.
\State $\ppbar_L \gets \set{L_i|~0 \leq i \leq n-1}$.
\State $\ppbar_S \gets \set{S_i|~0 \leq i \leq n-1}$.
\State Let $\bar{P}_i= \left\{ \begin{array}{ll}
L_{i/2} & \textrm{if~}i\textrm{~is even}\\
S_{\lfloor i/2 \rfloor} & \textrm{otherwise}
\end{array} \right.$
\State $\ppbar \leftarrow \set{\bar{P}_i|~ 0 \leq i \leq 2n-1}$
\Comment $=\ppbar_L \cup \ppbar_S$
\State \Return $\repbar$
\end{algorithmic}
\end{algorithm}

Figure \ref{fig:P1P3-cycle-representation} depicts a YES instance of $\prbpthree$.

\begin{figure}[htbp]
\centering
\scalebox{0.9} 
{
\begin{pspicture}(0,-3.2517188)(17.097187,3.2517188)
\definecolor{color530}{rgb}{0.0,0.0,0.8}
\definecolor{color540}{rgb}{1.0,0.0,0.2}
\psline[linewidth=0.04cm](0.9609375,2.2132812)(2.6409376,-0.08671875)
\psline[linewidth=0.04cm](2.6409376,-0.08671875)(0.4009375,-2.3867188)
\psline[linewidth=0.04cm](2.6209376,-0.10671875)(6.0009375,-0.10671875)
\psline[linewidth=0.04cm](6.0009375,-0.10671875)(7.5809374,1.2532812)
\psline[linewidth=0.04cm](7.5809374,1.2532812)(7.5809374,2.6132812)
\psline[linewidth=0.04cm](7.5809374,1.2732812)(9.580937,0.81328124)
\psline[linewidth=0.04cm](6.0009375,-0.10671875)(7.3609376,-2.5067186)
\psline[linewidth=0.04,linestyle=dashed,dash=0.16cm 0.16cm,dotsize=0.07055555cm 2.0]{**-**}(0.2009375,-2.1467187)(2.3209374,-0.047480654)(1.1409374,1.6132812)(0.75666565,2.1332812)
\psline[linewidth=0.04cm,fillcolor=black,linestyle=dashed,dash=0.16cm 0.16cm,dotsize=0.07055555cm 2.0]{**-**}(2.7609375,0.19328125)(1.1809375,2.2732813)
\psline[linewidth=0.04cm,fillcolor=black,linestyle=dashed,dash=0.16cm 0.16cm,dotsize=0.07055555cm 2.0]{**-**}(7.7209377,1.0732813)(9.520938,0.61328125)
\psline[linewidth=0.04cm,fillcolor=black,linestyle=dashed,dash=0.16cm 0.16cm,dotsize=0.07055555cm 2.0]{**-**}(6.3809376,-0.24671875)(7.6209373,-2.4467187)
\psline[linewidth=0.04cm,fillcolor=black,linestyle=dashed,dash=0.16cm 0.16cm,dotsize=0.07055555cm 2.0]{**-**}(2.7609375,-0.32671875)(0.7209375,-2.5667188)
\usefont{T1}{ptm}{m}{n}
\rput(8.132343,3.0632813){$P_{2.3}$}
\usefont{T1}{ptm}{m}{n}
\rput(10.052343,0.60328126){$P_5$}
\usefont{T1}{ptm}{m}{n}
\rput(7.6723437,0.08328125){$P_6$}
\usefont{T1}{ptm}{m}{n}
\rput(7.972344,-2.6167188){$P_7$}
\usefont{T1}{ptm}{m}{n}
\rput(5.7323437,-1.7767187){$P_8$}
\usefont{T1}{ptm}{m}{n}
\rput(0.49234375,-2.8167188){$P_9$}
\usefont{T1}{ptm}{m}{n}
\rput(1.3223437,-0.09671875){$P_{10}$}
\psline[linewidth=0.04,linestyle=dashed,dash=0.16cm 0.16cm,dotsize=0.07055555cm 2.0]{**-**}(9.440937,0.33328125)(7.5809374,0.8332813)(6.7009373,-0.08671875)(7.9609375,-2.2467186)
\usefont{T1}{ptm}{m}{n}
\rput(4.3323436,0.72328126){$P_1$}
\psdots[dotsize=0.12](0.9609375,2.1932812)
\psdots[dotsize=0.12](2.6209376,-0.08671875)
\psdots[dotsize=0.12](0.4409375,-2.3667188)
\psdots[dotsize=0.12](7.3409376,-2.4667187)
\psdots[dotsize=0.12](6.0009375,-0.10671875)
\psdots[dotsize=0.12](7.5809374,1.2732812)
\psdots[dotsize=0.12](9.580937,0.81328124)
\usefont{T1}{ptm}{m}{n}
\rput(0.70234376,2.5432813){$P_{11}$}
\psdots[dotsize=0.12](7.5809374,2.5932813)
\psline[linewidth=0.04cm](7.6009374,1.2732812)(8.340938,1.0932813)
\psline[linewidth=0.04,linestyle=dashed,dash=0.16cm 0.16cm,dotsize=0.07055555cm 2.0]{**-**}(1.4809375,2.4132812)(2.9809375,0.29328126)(5.9209375,0.25328124)(7.3409376,1.4532813)(7.3209376,2.6332812)
\psline[linewidth=0.04,linestyle=dashed,dash=0.16cm 0.16cm,dotsize=0.07055555cm 2.0]{**-**}(8.020938,2.5932813)(8.060938,1.4132812)(9.880938,1.0332812)
\psline[linewidth=0.04cm,fillcolor=black,linestyle=dashed,dash=0.16cm 0.16cm,dotsize=0.07055555cm 2.0]{**-**}(7.8009377,2.6532812)(7.8009377,1.3932812)
\usefont{T1}{ptm}{m}{n}
\rput(8.612344,1.7832812){$P_4$}
\psline[linewidth=0.04,linestyle=dashed,dash=0.16cm 0.16cm,dotsize=0.07055555cm 2.0]{**-**}(1.1009375,-2.6067188)(3.0209374,-0.44671875)(5.7809377,-0.40671876)(7.1209373,-2.6467187)
\psdots[dotsize=0.12](14.300938,2.4132812)
\psdots[dotsize=0.12](14.300938,-2.7667189)
\psdots[dotsize=0.12](11.800938,0.53328127)
\psdots[dotsize=0.12](16.720938,-1.0867188)
\psdots[dotsize=0.12](12.820937,1.9732813)
\psdots[dotsize=0.12](15.780937,-2.3067188)
\psdots[dotsize=0.12](15.720938,1.9932812)
\psdots[dotsize=0.12](12.900937,-2.3467188)
\psdots[dotsize=0.12](16.700937,0.73328125)
\psdots[dotsize=0.12](11.940937,-1.2867187)
\psline[linewidth=0.04cm,linecolor=color530](11.800938,0.5132812)(11.940937,-1.2867187)
\psline[linewidth=0.04cm,linecolor=color530](11.940937,-1.2867187)(12.920938,-2.3667188)
\psline[linewidth=0.04cm,linecolor=color530](12.920938,-2.3667188)(14.320937,-2.7667189)
\psline[linewidth=0.04cm,linecolor=color530](14.320937,-2.7667189)(15.780937,-2.3067188)
\psline[linewidth=0.04cm,linecolor=color530](15.780937,-2.3067188)(16.720938,-1.1067188)
\psline[linewidth=0.04cm,linecolor=color530](16.720938,-1.1067188)(16.700937,0.75328124)
\psline[linewidth=0.04cm,linecolor=color530](16.700937,0.75328124)(15.700937,2.0132813)
\psline[linewidth=0.04cm,linecolor=color530](15.700937,2.0132813)(14.300938,2.4132812)
\psline[linewidth=0.04cm,linecolor=color530](14.300938,2.4132812)(12.840938,1.9932812)
\psline[linewidth=0.04cm,linecolor=color530](12.840938,1.9932812)(11.800938,0.5132812)
\psline[linewidth=0.04cm,linecolor=color540,linestyle=dashed,dash=0.16cm 0.16cm](14.280937,2.4332812)(16.680937,0.75328124)
\usefont{T1}{ptm}{m}{n}
\rput(14.267813,2.7232811){1}
\usefont{T1}{ptm}{m}{n}
\rput(16.012032,2.2432814){2.3}
\usefont{T1}{ptm}{m}{n}
\rput(16.901875,0.94328123){4}
\usefont{T1}{ptm}{m}{n}
\rput(16.970469,-1.0367187){5}
\usefont{T1}{ptm}{m}{n}
\rput(16.05625,-2.6967187){6}
\usefont{T1}{ptm}{m}{n}
\rput(14.39625,-3.0967188){7}
\usefont{T1}{ptm}{m}{n}
\rput(12.729688,-2.5567188){8}
\usefont{T1}{ptm}{m}{n}
\rput(11.536094,-1.2967187){9}
\usefont{T1}{ptm}{m}{n}
\rput(11.393437,0.5232813){10}
\usefont{T1}{ptm}{m}{n}
\rput(12.517813,2.2232811){11}
\psline[linewidth=0.04cm,linecolor=color540,linestyle=dashed,dash=0.16cm 0.16cm](16.660938,0.7132813)(15.800938,-2.2867188)
\psline[linewidth=0.04cm,linecolor=color540,linestyle=dashed,dash=0.16cm 0.16cm](15.780937,-2.2867188)(12.900937,-2.3267188)
\psline[linewidth=0.04cm,linecolor=color540,linestyle=dashed,dash=0.16cm 0.16cm](12.900937,-2.3067188)(11.780937,0.55328125)
\psline[linewidth=0.04cm,linecolor=color540,linestyle=dashed,dash=0.16cm 0.16cm](11.780937,0.55328125)(14.320937,2.3932812)
\psline[linewidth=0.04cm,linecolor=color540,linestyle=dashed,dash=0.16cm 0.16cm](14.280937,2.3932812)(12.900937,-2.3867188)
\psline[linewidth=0.04cm,linecolor=color540,linestyle=dashed,dash=0.16cm 0.16cm](14.320937,2.4332812)(15.780937,-2.2867188)
\psline[linewidth=0.04cm,linecolor=green](12.900937,1.7932812)(12.980938,0.17328125)
\psline[linewidth=0.04cm,linecolor=green](12.980938,0.17328125)(12.120937,-1.1067188)
\psline[linewidth=0.04cm,linecolor=green](12.980938,0.15328126)(14.340938,-0.48671874)
\psline[linewidth=0.04cm,linecolor=green](14.340938,-0.48671874)(15.780937,0.21328124)
\psline[linewidth=0.04cm,linecolor=green](15.780937,0.21328124)(15.620937,1.7932812)
\psline[linewidth=0.04cm,linecolor=green](15.780937,0.21328124)(16.420937,-0.7867187)
\psline[linewidth=0.04cm,linecolor=green](14.340938,-0.50671875)(14.260938,-2.5267189)
\psdots[dotsize=0.12,linecolor=green](12.980938,0.15328126)
\psdots[dotsize=0.12,linecolor=green](12.100938,-1.1267188)
\psdots[dotsize=0.12,linecolor=green](12.900937,1.7732812)
\psdots[dotsize=0.12,linecolor=green](14.340938,-0.50671875)
\psdots[dotsize=0.12,linecolor=green](14.260938,-2.5267189)
\psdots[dotsize=0.12,linecolor=green](16.440937,-0.7867187)
\psdots[dotsize=0.12,linecolor=green](15.640938,1.7732812)
\end{pspicture}
}
\caption{A pair $(G,C)$, its weak dual tree $\wdtg$ and the representation of $(G,C)$ returned by $\algone$.}
\label{fig:P1P3-cycle-representation}
\end{figure}

\section{Conclusion and Further Research}\label{sec:conclusion}
In this work we define the family of $\enpt$ graphs that are subgraphs of $\ept$ graphs. We showed that trees, cliques and holes are $\enpt$ graphs. We then started the study of characterization of representations of holes. It turns out that in $\enpt$ graphs, representations of holes have a much more complicated structure, yielding several possible representations. We considered $\ept$, $\enpt$ pairs of graphs, i.e. the $\ept$ and $\enpt$ graphs of a given representation. We defined three properties $(P1)$, $(P2)$, $(P3)$ on these pairs of graphs. We characterized the representations of pairs having all three properties as planar tours of the weak dual of the $\ept$ graph.

Our investigation in \cite{BESZ13-TR2} deals with the generalization of the results to representations that do not satisfy assumptions $(P1-3)$. We first relax assumption $(P1)$, and show that the unique minimal representation of a pair satisfying $(P2-3)$ is a \emph{broken planar tour}, which is obtained from a planar tour by \emph{breaking apart} individual paths into sub-paths. Then, we relax assumption $(P2)$, and show that the unique minimal representation of a pair satisfying $(P3)$ is a \emph{broken planar tour with cherries}, where a cherry is a sub-graph of a tree isomorphic to a $P_3$ whose leaves are leaves of the tree. For both cases we provide polynomial-time algorithms to find these representations. We also show that if assumption $(P3)$ is relaxed, such a representation can not be found in polynomial time, unless $\textsc{P}=\textsc{NP}$.

\end{document}